\documentclass[conference,a4paper]{IEEEtran}
\IEEEoverridecommandlockouts
% The preceding line is only needed to identify funding in the first footnote. If that is unneeded, please comment it out.
\usepackage{cite}
\usepackage{amsmath,amssymb,amsfonts}
\usepackage{algorithmic}
\usepackage{amsthm}
\allowdisplaybreaks
\usepackage{graphicx}
\usepackage{subfigure} 
\usepackage{textcomp}
\usepackage{xcolor}
\usepackage{bbm}
\newtheorem{theorem}{\bf Theorem}
\newtheorem{lemma}{Lemma}

\newtheorem{Remark}{Remark}

\newtheorem{Definition}{Definition}
\newtheorem*{Threat Model}{Threat Model}
\newtheorem{corollary}{\bf Corollary}
\usepackage{amsmath,amsfonts}
\usepackage{algorithmic}
\usepackage{array}
\usepackage[caption=false,font=normalsize,labelfont=sf,textfont=sf]{subfig}
\usepackage{stfloats}
\usepackage{url}
\usepackage{verbatim}
\usepackage{graphicx}
\usepackage{balance}
\def\BibTeX{{\rm B\kern-.05em{\sc i\kern-.025em b}\kern-.08em
		T\kern-.1667em\lower.7ex\hbox{E}\kern-.125emX}}
\begin{document}

\title{Multi-Server Secure Aggregation with Unreliable Communication Links
}
\author{Kai Liang, Songze Li, Ming Ding and Youlong Wu
	\thanks{Kai Liang and Youlong Wu are with the School of Information Science and Technology, ShanghaiTech University, Shanghai 201210, China. (e-mail: \{liangkai, wuyl1\}@shanghaitech.edu.cn). Songze Li is with the Thrust of Internet of Things, The Hong Kong University of Science and Technology (Guangzhou), Guangzhou, China, and also with the Department of Computer Science and Engineering, The Hong Kong University of Science and Technology, Hong Kong SAR, China (e-mail: songzeli@ust.hk).
		Ming Ding is with the Data61, CSIRO, Sydney, NSW 2015, Australia (e-mail: ming.ding@data61.csiro.au).}}
\maketitle

\maketitle

\begin{abstract}

In many distributed learning setups such as federated learning (FL),  client nodes at the edge use individually collected data to compute local gradients and  send them to a central master server. The  master server then aggregates the received gradients and broadcasts the aggregation to all clients, with which the clients can update the global model. In this paper, we consider multi-server federated learning with secure aggregation and unreliable communication links. We first define a threat model using Shannon’s information-theoretic security framework, and propose a novel scheme called Lagrange Coding with Mask (LCM), which divides the servers into groups and uses Coding and Masking techniques. LCM can achieve a trade-off between the uplink and downlink communication loads by adjusting the number of servers in each group. Furthermore, we 
derive the lower bounds of the uplink and downlink communication loads, respectively, and prove that   LCM achieves the optimal uplink communication load, which is unrelated to the number of collusion clients.
\end{abstract}

\begin{IEEEkeywords}
Coding computing,  federated learning, straggling links, secure aggregation
\end{IEEEkeywords}

\section{Introduction}

With the proliferation of smartphones, wearables, and other Internet of Things (IoT) devices, an enormous amount of data is generated every moment. 
Many applications require this vast amount of data to improve their services. 
Machine learning (ML) techniques can make predictions or inferences from massive amounts of data, making it widely used in many applications. However, the ML paradigm creates serious privacy and security issues due to the large amount of data used about users, so careful security mechanisms are required to ensure that private information leakage is minimized while maintaining algorithm performance as much as possible. Due to the increasing storage and computing power of mobile devices, a new distributed learning paradigm has attracted much attention, i.e., Federated Learning (FL)\cite{11}. In FL, client nodes at the edge compute the local gradients and send them to a central master server, and the master aggregates them and broadcasts the aggregation to all clients, with which the clients can update the global model.  An important feature of federated learning is that data is stored locally, and only gradient values are exchanged during the joint training, which protects users' privacy to a certain extent.

In classic FL, a single server runs the model or gradient aggregation, which can lead to a single point of failure. To address this issue, multiple servers can be used for aggregation operations to improve the robustness of federated learning\cite{he2020secure, xu2020privacy,brunetta2021non,cryptoeprint:2022/1695,corrigan2017prio,cryptoeprint:2021/576}. The multi-server secure aggregation problem still suffers from communication bottlenecks due to limited communication resources and the unreliability of communication links. In \cite{1}, motivated by the emergence of a multi-access edge computing ecosystem, Prakash \emph{et al.}  used helper nodes and coding strategies to achieve resiliency against \emph{straggling} client-to-helpers links.  Note that their model requires the master to communicate with helpers through noiseless links, and the two-hop communications from clients to helpers and helpers to master may result in a larger communication delay. The authors of \cite{sasidharan2022coded} established a trade-off between the communication costs at clients and helper nodes by using the well-known pyramid codes. In \cite{1,sasidharan2022coded}, they considered the case where communication links between clients and servers (helper nodes) are unreliable, and used coding techniques to inject redundancy to resist straggling communication links. Still, the authors did not consider privacy constraints. On the other hand, the authors of \cite{he2020secure, xu2020privacy,cryptoeprint:2022/1695,corrigan2017prio,cryptoeprint:2021/576} considered privacy-preserving machine learning with two servers. \cite{jia2022x,9834439,9965815} consider the problem of federated submodel learning under multiple servers and the authors proposed an adaptive scheme which utilized all available servers to reduce their communication costs. The authors\cite{brunetta2021non} proposed a non-interactive, secure verifiable aggregation for decentralized, privacy-preserving learning (NIVA), which allowed distributed aggregation of secret inputs from multiple clients by multiple untrusted servers. However, these works generally assume that the communication links between clients and servers were reliable and error-free.

In this paper, we consider the \emph{multi-server secure aggregation problem} for FL, that is, clients send gradient values to multiple servers, and then each client can get the final aggregation result without revealing its private data. Our goal is to design a communication efficient and robust secure aggregation scheme when the communication links between servers and clients are unreliable. 

The main contributions of this work are summarized:
\begin{itemize}
\item We formally characterize the multi-server secure aggregation problem with unreliable communication links, and define a threat model using Shannon’s information-theoretic security framework\cite{shannon1949communication}. To the best of our knowledge, we are the first to consider unreliable communication links in a multi-server secure aggregation scenario. In the threat model, the colluding servers cannot infer any information about client's data, even the aggregated value, and meanwhile, the colluding clients cannot know other honest clients' data.
\item We propose a novel scheme called Lagrange Coding with Mask (LCM). The LCM is motivated by Masking techniques and the Lagrange Coded Computing (LCC)\cite{7}, which was initially developed for distributed computing applications and has demonstrated remarkable resilience against straggling and malicious nodes. However, it is worth noting that LCC is not directly applicable to the scenarios addressed in this paper since colluding clients may potentially gain access to the original information of other clients through the servers. LCM can trade off the uplink communication load for the downlink communication load by adjusting the number of servers in each group.%In particular, when there is only one server in each group, LCM achieves the optimal uplink communication load, but at the cost of a higher downlink communication load.
\item We also characterize the lower bounds of the optimal uplink and downlink communication loads under the threat model. 
When there is only one server in each group, LCM achieves the optimal uplink communication load,   at the cost of a higher downlink communication load.
As the number of collusion servers increases, the optimal uplink communication load also increases. Interestingly, the optimal uplink communication load is unrelated to the number of collusion clients.

%\item We propose a novel scheme called Lagrange Coding with Mask (LCM). The key idea of LCM is to divide the servers into groups and use Lagrange Coding and Masking techniques. LCM can control the uplink and downlink communication loads by adjusting the number of servers in each group. In particular, when there is only one server in each group, LCM achieves the optimal uplink communication load, but at the cost of a higher downlink communication load.

\end{itemize}

\section{Problem Setup}
%In this section we describe the setting of  computation, communication and network structure, and introduce parameters to characterize our models.

%\subsection{Computation  Model}
%In many machine learning techniques such as neural networks, their main purpose is to estimate parameters of models from a large number of samples and make it have good generalization properties. The most common way is to minimize the empirical loss function. 
{
\subsection{Federated Learning with Multiple Servers}
 Consider a  multi-server federated learning setup consisting of straggling communication links as depicted in Fig. \ref{model}.  It consists of $E$ clients and $H$ servers, where each client has a unicast communication link to each server, while the links between clients and servers are unreliable in the sense that
there exists up to  $s$ straggling links out of $H$ server links per client. 
\begin{figure}[htb]
	\centerline
	{\includegraphics[width=1\linewidth]{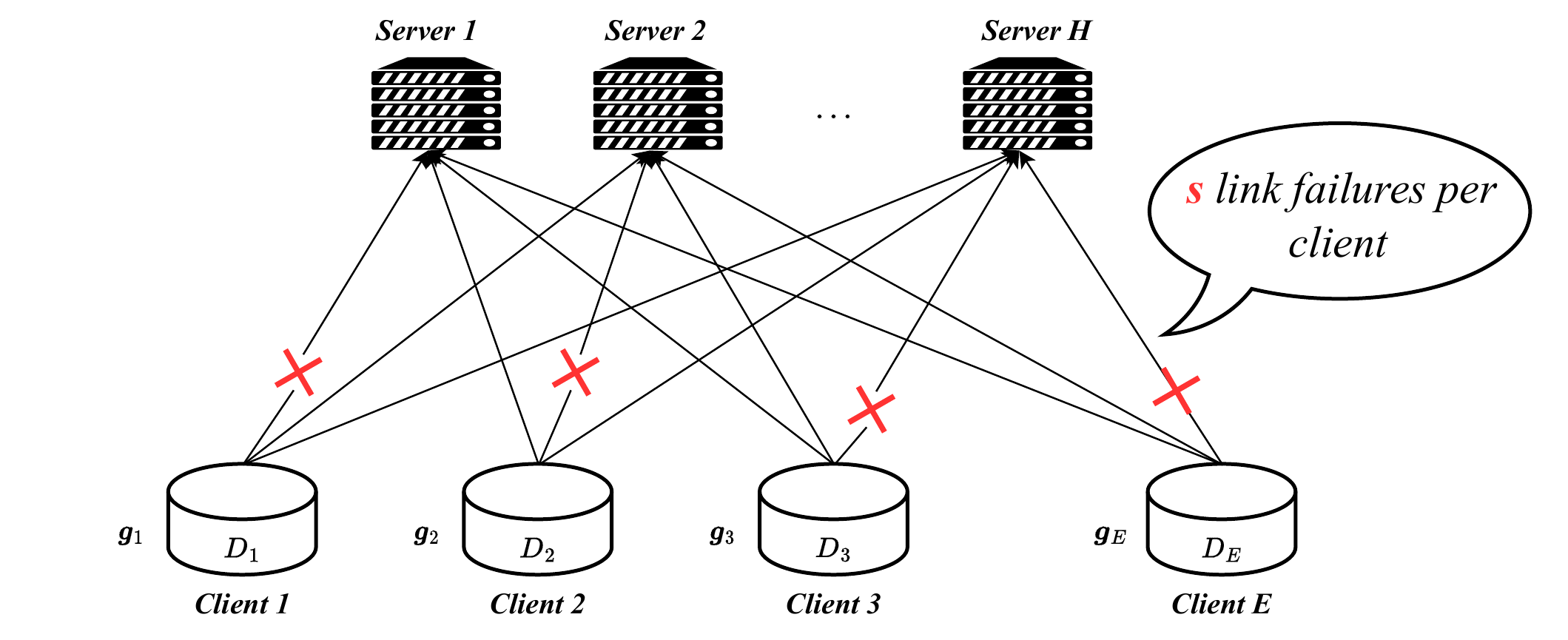}}
	\caption{Multi-server federated learning network with straggling links.}
	\label{model}
\end{figure}

Use notation $[E]\triangleq \{1,\ldots,E\}$ and denote the data set as   $D=\{(\boldsymbol{x}_{j},y_{j}):\boldsymbol{x}_{j}\in\mathbb{R}^{d},y_{j}\in\mathbb{R},j=1,2,\cdots, n \}$, where $d$ is a positive integer,  $\boldsymbol{x}_{j}$ an $y_{j}$ denote $j$th data and its label, respectively.  Each client $i\in[E]$ collects local private data ${D_{i}}$ with $D=\bigcup_{i=1}^{E}{D_{i}}$. The training process is to solve the following optimization problem:
\begin{equation}\label{LossFun}
	\theta^{*}=\arg\min_{\theta\in\mathbb{R}^{p}}\sum_{i=1}^{E}\sum_{(\boldsymbol{x}_{j},y_{j})\in D_i}{\ell_i(\theta;\boldsymbol{x}_{j},y_{j})+\lambda R(\theta)},
\end{equation}
where $\ell_i(\cdot)$ denotes the loss function of Client $i$, $\theta$ denotes the global model parameters,  $p$ is the length of the training model, $R(\cdot)$ is the regularization function, and $\lambda$  is  the regularization parameter.

We apply the widely used  gradient descent (GD) algorithm  to solve \eqref{LossFun}.  Let $\boldsymbol{g}^{(t)}_{i}\in\mathbb{R}^{p}$ be the local gradient associated with $D_{i}$   at iteration $t$, i.e., $\boldsymbol{g}^{(t)}_{i}=\sum_{(\boldsymbol{x}_{j},y_{j})\in D_{i}}\nabla\ell_i(\theta^{(t)};\boldsymbol{x}_{j},y_{j})$. The corresponding global gradient at iteration $t$ is
\begin{equation}\label{GlobalG}
\boldsymbol{g}^{(t)}_{D}=\sum_{i=1}^{E}{\boldsymbol{g}^{(t)}_{i}}=\sum_{i=1}^{E}\sum_{(\boldsymbol{x}_{j},y_{j})\in D_i}(\nabla\ell_i(\theta^{(t)};\boldsymbol{x}_{j},y_{j})+\lambda\nabla R(\theta^{(t)}),
\end{equation}
where $\theta^{(t)}$ denotes the global model parameters at iteration $t$ and is updated in the following way: 
\begin{equation}
	\theta^{(t+1)}=U(\theta^{(t)},\boldsymbol{g}^{(t)}_{D}),
\end{equation}
where $U$ denotes the  gradient-based optimizer. To simplify  notations, we omit the superscript $(t)$ in $\boldsymbol{g}^{(t)}_i$ and $\boldsymbol{g}_{D}^{(t)}$. We also assume that each element of gradients $\boldsymbol{g}_i$ and $\boldsymbol{g}_{D}$ is a symbol from a finite field $\mathbb{F}$ of size $|\mathbb{F}|$. We also assume that each gradient $\boldsymbol{g}_{i}$ is a uniform distribution over the field $\mathbb{F}^p$ and $\{\boldsymbol{g}_i\}_{i\in[E]}$ are independent. The uniformity and independence of the gradients are required for the converse proof, but are not necessary for the achievability proof\cite{zhao2022information}.

\subsection{Network Model}
The communication between the clients and servers can be described as a failure table ${\bf{T}}=({\bf{T}}_{ij})_{i\in[E],j\in[H]}$ with $E$ rows (clients) and $H$ columns (servers), where ${\bf{T}}_{ij}=1$ means that Client $i$ and Server $j$ are successfully connected, while ${\bf{T}}_{ij}=0$ is the opposite. Assume that there are at most $s$ straggler links per client, which means that each row of table $T$ has at most $s$ zeros. Note that the failure table ${\bf{T}}$ is unknown to the servers and clients until they communicate.   {The servers report back to successfully connected clients which clients they received messages from.} Then, each Client $i\in[E]$ knows ${\bf{T}}_{:,j_1},\cdots,{\bf{T}}_{:,j_k}$, where ${\bf{T}}_{:,j_l}$ denotes the column $j_l$ of failure table ${\bf{T}}$ for $l\in[k]$ and $j_1,\cdots,j_k$ denote the indices of the servers successfully connected to Client $i$, i.e., ${\bf{T}}_{ij_{l}}=1$ for $l\in[k], k\geq H-s$.

Each Client $i$  encodes $\boldsymbol{g}_{i}$ using encoding functions $\mathcal{F}_i=(\mathcal{F}_{i}^{1},\mathcal{F}_{i}^{2},\cdots, \mathcal{F}_{i}^{H})$ and a random bit sequence $\boldsymbol{r}_i\in \{0,1\}^{*}$ independent of gradient $\boldsymbol{g}_{i}$  to generate the coded message $\boldsymbol{c}_{i}=[\boldsymbol{c}_{i,1}^{T},\boldsymbol{c}_{i,2}^{T},\cdots,\boldsymbol{c}_{i,H}^{T}]$ with $\boldsymbol{c}_{i}\in\mathbb{F}^{q_i}$ and $\boldsymbol{c}_{i,j}\in\mathbb{F}^{q^j_i}$, 
where $\{0,1\}^{*}$ denotes all binary sequences, $\mathcal{F}_{i}^{j}$ is a mapping function from $\mathbb{F}^{p}\times \{0,1\}^{*}$ to $\mathbb{F}^{q^j_{i}}$, $\boldsymbol{c}_{i,j}=\mathcal{F}_{i}^{j}(\boldsymbol{g}_{i},\boldsymbol{r}_i)$,  and $q_i=\sum_{j=1}^{H}{q_{i}^j}$, for $j\in[H]$. Then Client $i$ sends $\boldsymbol{c}_{i,j}$ to Server $j$. { We can use $q_i^j$ to denote the length of messages $\boldsymbol{c}_{i,j}$.}

Let $\Omega(s)$ be the set of all straggling patterns with  $s$ straggling links per client. We assume equal probability for any pattern $m\in\Omega(s)$, i.e., $P(m)=\frac{1}{|\Omega(s)|}$. Given a straggling pattern $m$, let $\mathcal{L}_i^{j,m}(\boldsymbol{c}_{i,j})$ be $0$ if the link between Client $i$ and Server $j$ is straggling, and be $\boldsymbol{c}_{i,j}$ otherwise. After the uplink transmission, Server $j$ receives  the concatenating  messages:
\begin{IEEEeqnarray}{rCl}\label{eqHerlpM}
	{\bf{u}}^{j,m}=  (\mathcal{L}_1^{j,m}(\boldsymbol{c}_{1,j}),\ldots,\mathcal{L}_E^{j,m}(\boldsymbol{c}_{E,j})).
\end{IEEEeqnarray}
Each Server $j$ encodes ${\bf{u}}^{j,m}$ using a downlink encoder $\mathcal{H}^{j,m}_i$ to generate message $W^{j,m}_i =\mathcal{H}^{j,m}_i({\bf{u}}^{j,m})$ with $W^{j,m}_i\in\mathbb{F}^{d^{j,m}_i}$ for Client $i$. Then, 
Client $i$ receives messages  
\begin{IEEEeqnarray}{rCl}\label{eqUserM}
	W_{i}^m =(\mathcal{L}_i^{1,m}(W^{1,m}_i  ),\ldots,\mathcal{L}_i^{H,m}(W^{H,m}_i )),
\end{IEEEeqnarray} 
with $W_{i}^m\in \mathbb{F}^{ d_i^m}$ and  { $d_i^m\triangleq \sum_{j=1}^H d^{j,m}_i\mathbbm{1}_{\{\mathcal{L}_i^{j,m}(\boldsymbol{c}_{i,j})\neq 0\}}$} where we use $d^{j,m}$ to denote the length of messages $W_{i}^m$. Finally Client $i$  uses decoder  $\mathcal{D}_i$ to recover $\boldsymbol{g}_{D}=\mathcal{D}_i(W_{i}^m)$, i.e.,
\begin{equation}\label{decode}
	H(\boldsymbol{g}_{D}|W_{i}^m)=0, \   \forall i\in[E].
\end{equation}

% as described in Def.2 and Def.3, respectively.\\

\subsection{Privacy Leakage}
{Since the local gradients carry a lot of information about the clients' dataset, it is easy to reconstruct the clients' data through the local gradients by using a model inversion attack\cite{zhu2019deep,geiping2020inverting}. In order to avoid this privacy leakage problem, The authors\cite{bonawitz2017practical} introduced a protocol for secure aggregation with one server which can compute the aggregated gradient while ensuring that the server (and other clients) can only learn the value of the aggregated gradient, and no additional information about the client’s data can be learned. Similarly, we can define the secure aggregation protocol for the multi-server federated learning setup. Formally, we consider the following threat model:
	\begin{Threat Model}\label{tm1}
		We assume that all nodes (clients and servers) are semi-honest, i.e., all nodes will run the program according to the protocol, but each server is curious about all clients' private data and each client is curious about other clients' data. We assume that there are at most $T_h$ servers colluding and at most $T_c$ clients colluding. These colluding servers can neither infer any client's data nor infer the final aggregated value,  and  meanwhile, colluding clients cannot know other honest clients' data.
	\end{Threat Model}
	\begin{Definition}\label{D1} $(T_h, T_c)$-private: We say that an aggregation protocol is $(T_h, T_c) $-private if it satisfies the following conditions:\\
		(1): The mutual information between the messages owned by any set of at most $T_h$ colluding servers and the data of clients is zero, i.e.,
		\begin{equation}\label{pr1}
		I(\boldsymbol{g}_1,\cdots,\boldsymbol{g}_E;\mathtt{VIEW}_{\mathcal{T}_h})=0,
		\end{equation}
		where $\mathtt{VIEW}_{\mathcal{T}_h}$ denotes all messages known to the any subset $\mathcal{T}_h$ of servers with $|\mathcal{T}_h|\leq T_h$, i.e., $\mathtt{VIEW}_{\mathcal{T}_h}\triangleq\{{\bf{u}}^{j,m}:j\in\mathcal{T}_h\}$, for  straggling pattern $m\in\Omega(s)$;\\
		(2): Given any at most $T_c$ colluding clients, the mutual information between the messages owned by the colluding clients and the data of  honest clients is zero, conditioning on gradients of colluding clients and the aggregation gradient, i.e.,
		\begin{equation}\label{pri22}
		I(\{\boldsymbol{g}_i\}_{i\in[E]\backslash \mathcal{T}_c};\mathtt{VIEW}_{\mathcal{T}_c}|(\boldsymbol{g}_j)_{j\in\mathcal{T}_c},\boldsymbol{g}_D)=0,
		\end{equation}
		where $\mathtt{VIEW}_{\mathcal{T}_c}$ denotes all messages known to the any subset $\mathcal{T}_c$ of clients with $|\mathcal{T}_c|\leq T_c$, i.e.,  $\mathtt{VIEW}_{\mathcal{T}_c}\triangleq\{{\bf{u}}^{j,m}:j\  \text{satisfies}\  T_{ij}=1\ \text{for some}\  i\in\mathcal{T}_c \}\bigcup\{(\boldsymbol{g}_i,\boldsymbol{r}_i):i\in\mathcal{T}_c\}$, for  straggling pattern $m\in\Omega(s)$,
	\end{Definition}
}
\begin{Remark}
	{Our setup allows clients to retrieve any data stored on the servers with which they have established a successful connection and we can regard the client as colluding with the servers it successfully connects to. Therefore, colluding clients can potentially access all data residing on the servers.}
\end{Remark}

%\subsection{Parameters}
\subsection{Problem Formulation}
Our goal is to design aggregation and communication schemes such that all clients can recover the global gradient $\boldsymbol{g}_{D}$  under all scenarios of up to $s$ straggling communication links per client, and the threat model in Definition \ref{D1}. }

Now we define some metrics related to our problem similar to {\cite{1,sasidharan2022coded}}:

\begin{Definition}Resiliency threshold $s$: If each client can recover $\boldsymbol{g}_{D}$ in \eqref{GlobalG} for any pattern of up to $s$ straggling links per client, we say the scheme has a resiliency threshold of $s\in\mathbb{N}^{+}$ or is $s$-resiliency.
\end{Definition}

\begin{Definition}
Uplink communication load $C_\textnormal{up}$: Given a uplink transmission with   encoded message $(\boldsymbol{c}_1,\ldots,\boldsymbol{c}_E)$, $\boldsymbol{c}_i\in \mathbb{F}^{q_{i}} $, $C_\textnormal{up}$ is the maximum size of these coded messages $\{\boldsymbol{c}_i\}_{i=1}^E$, normalized by the length of the local gradient vector, i.e.,
\begin{equation}\label{comupdef}
C_\textnormal{up}=\max_{i\in[E]} \frac{q_i}{p}.
\end{equation}
\end{Definition}

\begin{Definition} Downlink communication load $C_\textnormal{down}$: Given a straggling pattern $m\in\Omega(s)$ and downlink encoded messages  $W^m_i\in  \mathbb{F}^{ d_i^m}$, for $i\in[E]$, $C_\textnormal{down}$ is the maximum size of $W_{i}^m$, normalized by the lengeth of the local gradient vector, i.e.,
\begin{equation}\label{rCl}
C_\textnormal{down}=\max_{i\in[E]} \frac{d_i^m}{p}.
\end{equation}
%where $\textnormal{length}(W_{i}^{i_{j},m})$ denote the length of   $W_{i}^{i_j,m}$.

\end{Definition}

{

For a resiliency threshold of $s$ and straggling pattern $m$, a pair $(C_\textnormal{up},C_\textnormal{down})  $ is \emph{achievable} if there exists a \emph{$(T_h, T_c)  $-private} aggregation protocol with encoders $\{\mathcal{F}_i\}_{i=1}^E$, $\{\mathcal{H}^{j,m}_i\}_{i\in[E],j\in[H]}$  and decoders $\{\mathcal{D}_{i}\}_{i=1}^E$  that achieves communication loads $C_\textnormal{up}$ and $C_\textnormal{down}$.

For the federated learning setting with $E$ clients, $H$ servers, a resiliency threshold of $s$, and privacy thresholds $T_h$ and $T_c$, the problem is to characterize the minimum uplink and downlink communication loads as follows:
\begin{eqnarray}
C_\textnormal{up}^{*}  &=&\inf_{(C_\textnormal{up},C_\textnormal{down})  \in\mathcal{A}}\sup_{m\in\Omega(s)} C_\textnormal{up},\label{st11}\\
C_\textnormal{down}^{*}  &=&\sup_{m\in\Omega(s)}\inf_{(C_\textnormal{up},C_\textnormal{down})  \in\mathcal{A}} C_\textnormal{down},\label{st21}
\end{eqnarray}
where $\mathcal{A}$ is the set of all achievable $(C_\textnormal{up},C_\textnormal{down})$ over all $m\in\Omega(s)$.} %Note that \eqref{st1} and \eqref{st2} are different, clients do not know which links are stragglers for uplink. We need to ensure that the encoding method is valid in any straggling patterns. However, for downlink, servers will send information about which data it receives, then clients design the coding method according to the received information.}

\section{Main Results}
The following {Lemma} \ref{uppri} gives upper bounds on the number of servers and clients that can collude under the threat model, respectively.
\begin{lemma}\label{uppri}
	For a federated learning setting with $E$ clients, $H$ servers, {the length  of gradient $p$} and a resiliency threshold of $s<\frac{H}{2}$, if a secure aggregation protocol is $(T_h,T_c)  $-private and $s$-resiliency, then we have
\begin{equation}\label{cup2}
T_h\leq H-2s-1,\  \ T_c\leq E-2.
\end{equation}

\end{lemma}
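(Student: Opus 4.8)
The plan is to prove both inequalities by contradiction, in each case playing the decoding guarantee \eqref{decode} (which $s$-resiliency provides) against a privacy constraint from Definition~\ref{D1}. The client bound $T_c\le E-2$ is the easy direction: if $T_c\ge E-1$ there is at most one honest client, say Client~$1$, and the colluders $\mathcal{T}_c=[E]\setminus\{1\}$ already hold in $\mathtt{VIEW}_{\mathcal{T}_c}$ all the server messages that any one of them needs to run its own decoder, so they can reconstruct $\boldsymbol{g}_D$; knowing in addition $(\boldsymbol{g}_j)_{j\ne1}$ they then obtain $\boldsymbol{g}_1=\boldsymbol{g}_D-\sum_{j\ne1}\boldsymbol{g}_j$ exactly, so the threat model's requirement that colluding clients learn nothing about an honest client's data cannot be met. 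Hence at least two honest clients are needed, i.e.\ $T_c\le E-2$.

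For the server bound $T_h\le H-2s-1$ I would argue as follows. Suppose toward a contradiction that $T_h\ge H-2s$. Since $s<H/2$ we have $H-2s\ge1$, so fix a set $\mathcal{T}\subseteq[H]$ with $|\mathcal{T}|=H-2s$ and partition the remaining $2s$ servers into disjoint sets $\mathcal{A}_1,\mathcal{A}_2$ with $|\mathcal{A}_1|=|\mathcal{A}_2|=s$. Consider the straggling pattern $m'\in\Omega(s)$ in which Client~$1$ straggles exactly on the links to $\mathcal{A}_1$, while every Client $i\in\{2,\dots,E\}$ straggles exactly on the links to $\mathcal{A}_2$. Under $m'$ each Server $j\in\mathcal{T}$ receives the full column $(\boldsymbol{c}_{1,j},\dots,\boldsymbol{c}_{E,j})$, while each Server $j\in\mathcal{A}_2$ receives only $\boldsymbol{c}_{1,j}$. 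Client~$1$'s good servers are $[H]\setminus\mathcal{A}_1=\mathcal{T}\cup\mathcal{A}_2$, so by \eqref{decode} it recovers $\boldsymbol{g}_D$ from their downlink messages; the messages it gets from $\mathcal{A}_2$ are functions of $\boldsymbol{c}_{1,j}=\mathcal{F}_1^j(\boldsymbol{g}_1,\boldsymbol{r}_1)$ alone, hence computable from $(\boldsymbol{g}_1,\boldsymbol{r}_1)$. Consequently $\sum_{i=2}^E\boldsymbol{g}_i=\boldsymbol{g}_D-\boldsymbol{g}_1$ is a deterministic function of $\big(\{\boldsymbol{c}_{i,j}\}_{i\in[E],\,j\in\mathcal{T}},\boldsymbol{g}_1,\boldsymbol{r}_1\big)$; since Client~$1$'s own uplink symbols $\{\boldsymbol{c}_{1,j}\}_{j\in\mathcal{T}}$ are themselves functions of $(\boldsymbol{g}_1,\boldsymbol{r}_1)$ we may drop them, obtaining $H\big(\sum_{i\ge2}\boldsymbol{g}_i\,\big|\,\{\boldsymbol{c}_{i,j}\}_{i\ge2,\,j\in\mathcal{T}},\,\boldsymbol{g}_1,\boldsymbol{r}_1\big)=0$.

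Now I would apply privacy \eqref{pr1} with $\mathcal{T}_h=\mathcal{T}$ (admissible because $|\mathcal{T}|=H-2s\le T_h$) under the same pattern $m'$, for which the servers of $\mathcal{T}$ see full columns: this gives $I(\boldsymbol{g}_1,\dots,\boldsymbol{g}_E;\{\boldsymbol{c}_{i,j}\}_{i\in[E],\,j\in\mathcal{T}})=0$, hence $I\big(\sum_{i\ge2}\boldsymbol{g}_i;\{\boldsymbol{c}_{i,j}\}_{i\ge2,\,j\in\mathcal{T}}\big)=0$, i.e.\ $H\big(\sum_{i\ge2}\boldsymbol{g}_i\mid\{\boldsymbol{c}_{i,j}\}_{i\ge2,\,j\in\mathcal{T}}\big)=H\big(\sum_{i\ge2}\boldsymbol{g}_i\big)$. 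Because $(\boldsymbol{g}_1,\boldsymbol{r}_1)$ is independent of everything generated by Clients $2,\dots,E$, deleting $(\boldsymbol{g}_1,\boldsymbol{r}_1)$ from the conditioning in $H\big(\sum_{i\ge2}\boldsymbol{g}_i\mid\{\boldsymbol{c}_{i,j}\}_{i\ge2,\,j\in\mathcal{T}},\boldsymbol{g}_1,\boldsymbol{r}_1\big)=0$ still leaves it zero, so $H\big(\sum_{i\ge2}\boldsymbol{g}_i\mid\{\boldsymbol{c}_{i,j}\}_{i\ge2,\,j\in\mathcal{T}}\big)=0$. Combining the last two identities forces $H\big(\sum_{i\ge2}\boldsymbol{g}_i\big)=0$, contradicting the assumed uniformity of $\{\boldsymbol{g}_i\}$ over $\mathbb{F}^p$ (for $E\ge2$ this sum is again uniform, with entropy $p\log|\mathbb{F}|>0$). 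Therefore $T_h\le H-2s-1$.

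I expect the delicate point to be the removal of the conditioning on Client~$1$'s randomness $(\boldsymbol{g}_1,\boldsymbol{r}_1)$: the decoding side naturally yields an entropy conditioned on $(\boldsymbol{g}_1,\boldsymbol{r}_1)$ while the privacy side is unconditional, and reconciling them relies on $\{(\boldsymbol{g}_i,\boldsymbol{r}_i)\}_{i\in[E]}$ being mutually independent — so that $(\boldsymbol{g}_1,\boldsymbol{r}_1)$ is independent of the pair $\big(\sum_{i\ge2}\boldsymbol{g}_i,\{\boldsymbol{c}_{i,j}\}_{i\ge2,\,j\in\mathcal{T}}\big)$ — together with the gradient uniformity that the converse already needs. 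The other thing to get right is choosing a single straggling pattern that simultaneously (i) lets Client~$1$ decode $\boldsymbol{g}_D$ while the servers in $\mathcal{A}_2$ are blinded to every gradient but $\boldsymbol{g}_1$, and (ii) lets the servers in $\mathcal{T}$ observe full columns so that \eqref{pr1} applies verbatim; the pattern $m'$ above accomplishes both, and the hypothesis $s<H/2$ is exactly what leaves room to pick $\mathcal{T},\mathcal{A}_1,\mathcal{A}_2$ pairwise disjoint with the stated sizes.
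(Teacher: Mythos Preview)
Your proposal is correct and follows essentially the same strategy as the paper: for $T_h$ you assume $T_h\ge H-2s$, build a straggling pattern in which one client reaches $H-s$ servers while all other clients reach only a common subset of $H-2s$ of them, and derive a contradiction between decodability and the server-privacy constraint \eqref{pr1}; for $T_c$ you use the immediate subtraction $\boldsymbol{g}_1=\boldsymbol{g}_D-\sum_{j\ne1}\boldsymbol{g}_j$. Your write-up is in fact more careful than the paper's: you explicitly handle the downlink messages Client~$1$ receives from the $s$ servers in $\mathcal{A}_2$ (which depend only on $(\boldsymbol{g}_1,\boldsymbol{r}_1)$) and justify removing the conditioning on Client~$1$'s randomness via independence, steps the paper's Appendix glosses over.
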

\begin{proof}
	See Appendix \ref{l1proof}
\end{proof}

The following Theorem \ref{the2} characterizes the optimal uplink and downlink communication loads under the threat model.
\begin{theorem}\label{the2}
	Given $T_h$ and $T_c$, for a federated learning setting with $E$ clients, $H$ servers, {length  of gradient $p$} and a resiliency threshold of $s<\frac{H}{2}$, we have
	\begin{align}
	\frac{H}{H-2s-T_h}\leq &C_\textnormal{up}^{*}\leq C_\textnormal{up}^\textnormal{LCM},\label{eqUp2}\\
	\frac{H-2s}{H-2s-T_h}\leq &C_\textnormal{down}^{*}\leq C_\textnormal{down}^\textnormal{LCM},\label{eqDown2}
	\end{align}
	where $C_\textnormal{up}^\textnormal{LCM}=\frac{\lfloor\frac{H}{v}\rfloor v}{\lfloor\frac{H}{v}\rfloor-\lfloor\frac{2s}{v}\rfloor-T_h}, C_\textnormal{down}^\textnormal{LCM}=\frac{\lfloor \frac{H}{v}\rfloor-\lfloor \frac{2s}{v}\rfloor}{\lfloor\frac{H}{v}\rfloor-\lfloor\frac{2s}{v}\rfloor-T_h}(E-1-\lceil\frac{E-1}{{{\lfloor \frac{H}{v}\rfloor-\max\{s-\lfloor\frac{H}{v}\rfloor (a-1),0 \}} \choose {\lfloor \frac{H}{v}\rfloor-\lfloor \frac{2s}{v}\rfloor}}}\rceil+v)$
	and the parameter ${v}$ satisfies $\lfloor\frac{H}{v}\rfloor-\lfloor\frac{2s}{v}\rfloor-T_h\geq 1$.
	
\end{theorem}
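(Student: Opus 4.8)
The statement consists of a converse --- the two information-theoretic lower bounds --- and an achievability claim that LCM realizes the pair $(C_\textnormal{up}^\textnormal{LCM},C_\textnormal{down}^\textnormal{LCM})$, and I would treat the two parts separately. For the converse, both bounds reduce, after fixing a suitable straggling pattern, to the classical ramp secret-sharing inequality (provable by a chain-rule-plus-averaging argument): if any $r$ of $n$ shares reconstruct a uniform secret and any $t$ shares are independent of it, then the share entropies sum to at least $\tfrac{n}{r-t}$ times the secret entropy. For the uplink I would fix a client $i^\star$ and view $\boldsymbol{c}_{i^\star,1},\dots,\boldsymbol{c}_{i^\star,H}$ as $H$ shares of $\boldsymbol{g}_{i^\star}$, working conditioned on $Z:=\{(\boldsymbol{g}_i,\boldsymbol{r}_i)\}_{i\neq i^\star}$, which is independent of $(\boldsymbol{g}_{i^\star},\boldsymbol{r}_{i^\star})$ so that $H(\boldsymbol{g}_{i^\star}\mid Z)=p\log|\mathbb{F}|$. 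Privacy~\eqref{pr1}, applied to any $T_h$-set $\mathcal{S}$ of servers via a pattern in which $i^\star$'s stragglers avoid $\mathcal{S}$ (feasible since $T_h\le H-2s-1<H-s$), gives $I(\boldsymbol{g}_{i^\star};\{\boldsymbol{c}_{i^\star,j}\}_{j\in\mathcal{S}})=0$, hence $t=T_h$; for reconstruction, taking two disjoint $s$-sets $A,B$ and the pattern in which $i^\star$ straggles on $A$ and some decoding client straggles on $B$, that client's downlink --- hence $\boldsymbol{g}_D$ --- depends on $\boldsymbol{c}_{i^\star}$ only through $\{\boldsymbol{c}_{i^\star,j}\}_{j\notin A\cup B}$, and since $\boldsymbol{g}_D=\boldsymbol{g}_{i^\star}+\sum_{i\neq i^\star}\boldsymbol{g}_i$ with the sum fixed by $Z$, those $H-2s$ shares determine $\boldsymbol{g}_{i^\star}$; every $(H-2s)$-set arises this way (its $2s$-element complement splits into two $s$-sets), so $r=H-2s$, and the ramp inequality gives $q_{i^\star}\log|\mathbb{F}|\ge\sum_jH(\boldsymbol{c}_{i^\star,j})\ge\tfrac{H}{H-2s-T_h}\,p\log|\mathbb{F}|$. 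For the downlink I would use the worst pattern $m^\star$ in which every client except a decoder $i'$ straggles on one common $s$-set $A$ while $i'$ straggles on a disjoint $s$-set $B$: the servers in $A$ then see only $\boldsymbol{c}_{i'}$, so $\{W^{j,m^\star}_{i'}\}_{j\in A}$ is a function of $(\boldsymbol{g}_{i'},\boldsymbol{r}_{i'})$, and recovering $\boldsymbol{g}_D$ --- equivalently $\sum_{i\neq i'}\boldsymbol{g}_i$ --- at client $i'$ reduces to recovering it from the $H-2s$ messages $\{W^{j,m^\star}_{i'}\}_{j\notin A\cup B}$ given $(\boldsymbol{g}_{i'},\boldsymbol{r}_{i'})$; moreover, for any $T_h$-set $\mathcal{S}$ of those $H-2s$ servers, \eqref{pr1} makes $\{W^{j,m^\star}_{i'}\}_{j\in\mathcal{S}}$ independent of $\sum_{i\neq i'}\boldsymbol{g}_i$ even given $(\boldsymbol{g}_{i'},\boldsymbol{r}_{i'})$ --- after conditioning on the latter these messages become a function of the other clients' inputs alone, which \eqref{pr1} renders independent of all the gradients. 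The ramp inequality with secret $\sum_{i\neq i'}\boldsymbol{g}_i$ then yields $C_\textnormal{down}^*\ge\tfrac{H-2s}{H-2s-T_h}$. (If a valid protocol exists, Lemma~\ref{uppri} gives $H-2s-T_h\ge1$ and $E\ge2$, so all quantities are nondegenerate; otherwise the bounds hold vacuously.)

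For the achievability I would construct LCM, verify its three required properties, and read off the loads. Partition $\lfloor H/v\rfloor v$ servers into $\lfloor H/v\rfloor$ groups of $v$, viewing a group as one logical node that a client reaches unless all $v$ of its links into that group straggle, and set $K:=\lfloor H/v\rfloor-\lfloor 2s/v\rfloor-T_h\ge 1$ (exactly the hypothesis on $v$). Each client encodes $\boldsymbol{g}_i$ by (i) a Lagrange/ramp code with $K$ data slots and $T_h$ random-padding slots evaluated at $\lfloor H/v\rfloor$ points, giving one length-$p/K$ share per group that it replicates to all $v$ servers of the group --- this yields $s$-resiliency (any $K$ group-shares interpolate the code polynomial, and a client loses a whole group only when all $v$ of its links to it straggle, so at most $\lfloor 2s/v\rfloor$ groups are lost over the uplink--downlink round trip) and the server-collusion privacy~\eqref{pr1} (any $T_h$ group-shares are independent of $\boldsymbol{g}_i$) --- together with (ii) additive zero-sum masks that vanish in $\sum_i\boldsymbol{g}_i$ but hide each $\boldsymbol{g}_i$ from any coalition of clients that reads all the servers' stored data, which gives the client-collusion privacy~\eqref{pri22}. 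Summing (i) and (ii), the uplink load is $\lfloor H/v\rfloor\cdot v\cdot(p/K)$ normalized by $p$, which is $C_\textnormal{up}^\textnormal{LCM}$. On the downlink each group returns the group-wise aggregate of the coded shares plus the coded side information a client needs to strip the $E-1$ foreign mask contributions; spreading that side information over the usable group-subsets shrinks it by the binomial factor in $C_\textnormal{down}^\textnormal{LCM}$, and combining this with the ramp overhead $\tfrac{\lfloor H/v\rfloor-\lfloor 2s/v\rfloor}{K}$, a $+v$ replication term, and the $\max\{s-\lfloor H/v\rfloor(a-1),0\}$ edge correction produces $C_\textnormal{down}^\textnormal{LCM}$; decodability holds since interpolation and mask cancellation need only $\lfloor H/v\rfloor-\lfloor 2s/v\rfloor$ group-messages, all of which a client receives.

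The converse is mostly bookkeeping once the straggling patterns above are identified, so I expect the main obstacle to lie in the achievability: making the Lagrange layer (straggler-resilience and \eqref{pr1}) and the masking layer (\eqref{pri22}) coexist under the coarse group quantization without either spoiling the other, and then pinning down the exact downlink expression --- in particular justifying the $\lceil (E-1)/\binom{\cdot}{\cdot}\rceil$-type coding saving and the $\max\{s-\lfloor H/v\rfloor(a-1),0\}$ correction term, which is where the delicate counting of which group-subsets a given client can actually exploit lives.
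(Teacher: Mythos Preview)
Your plan is essentially the paper's. The converse is the same ramp/averaging argument: the paper packages the ``chain-rule-plus-averaging'' step as a standalone lemma (for a random $k$-subset $J$ of $[n]$, $H(X_J\mid J)\le\frac{k}{n}\sum_j H(X_j)$) and chains it with decodability and the server-privacy constraint~\eqref{pr1} exactly as you describe, taking the supremum over straggling patterns to kill the residual mutual-information term; your explicit choice of disjoint $s$-sets $A,B$ is precisely the pattern that realises that supremum. The achievability you sketch---groups of $v$, a Lagrange/ramp layer with $K$ data slots and $T_h$ random slots for resilience and~\eqref{pr1}, plus pairwise zero-sum masks for~\eqref{pri22}---is the paper's LCM construction.

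One genuine correction on the downlink accounting. You already note in (ii) that the masks vanish in $\sum_i\boldsymbol{y}_i$; there is therefore no ``coded side information a client needs to strip the $E-1$ foreign mask contributions,'' and that is \emph{not} where the $(E-1-\lceil\cdot\rceil)$ term comes from. The mechanism is different: because of straggling, different clients' coded shares land in \emph{different} $(\lfloor H/v\rfloor-\lfloor 2s/v\rfloor)$-subsets of groups, so a server can sum locally only over the clients whose shares all reach the same group-subset as the decoding client; the remaining clients' shares must be forwarded unaggregated. The binomial saving is then exactly your balls-and-bins pigeonhole---at least $\lceil (E-1)/\binom{\cdot}{\cdot}\rceil$ of the other $E-1$ clients share a common group-subset and collapse to one aggregated message---and the $\max\{s-\lfloor H/v\rfloor(v-1),0\}$ term counts the worst-case number of groups a client can lose entirely. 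With this reinterpretation of the $(E-1)$ term your plan matches the paper.
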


\begin{proof}
	The uplink and downlink communication loads in \eqref{eqUp2} and \eqref{eqDown2} are achieved based on the LCM scheme present in Section \ref{LCM}. The lower bound is given in Appendix \ref{SecApp} and the proof of security is given in Appendix \ref{secproof}. 
\end{proof}
{\begin{Remark} 
	Our approach is motivated by the Lagrange Coded Computing (LCC) paradigm \cite{7}, which was initially developed for distributed computing applications and has demonstrated remarkable resilience against straggling and malicious nodes. However, it is worth noting that LCC is not directly applicable to the scenarios addressed in this paper. Specifically, in LCC, the original data can be reconstructed via Lagrangian interpolation once the data from sufficient working nodes is collected. If LCC is utilized directly, colluding clients may potentially gain access to the original information of other clients through the servers.
\end{Remark}
\begin{Remark} 
	The parameter ${v}$ is a hyperparameter of the LCM scheme present in Section \ref{LCM}. Specifically, the LCM scheme divides the servers into multiple groups, and each group contains ${v}$ servers. Therefore, the number of groups can be adjusted via the parameter $a$, thereby trading off between the LCM scheme's uplink and downlink communication loads.
	
\end{Remark}
From Theorem \ref{the2}, we obtain the folowing Corollary.% \ref{optimality} follows directly from Theorem 
\begin{corollary} \label{optimality}
	If $v=1$, then $C_\textnormal{up}^{*}=\frac{H}{H-2s-T_h}$, i.e., LCM scheme can achieve the optimal uplink communication load, and $C_\textnormal{down}^{*}$ is upper bounded by 
	\begin{equation}\label{ourdo1}
	C_\textnormal{down}^{*}\leq \frac{H-2s}{H-2s-T_h}(E-\lceil\frac{E-1}{{{H-s}\choose {s}}}\rceil).
	\end{equation}
	If $v=2s+1$,  then $C_\textnormal{up}^{*}$ is upper bounded by 
	\begin{equation}\label{ourUp2}
	C_\textnormal{up}^{*}\leq \frac{\lfloor\frac{H}{2s+1}\rfloor}{\lfloor\frac{H}{2s+1}\rfloor-T_h}(2s+1),
	\end{equation} and $C_\textnormal{down}^{*}$ is upper bounded by 
	\begin{equation}\label{ourdo2}
	C_\textnormal{down}^{*}\leq\frac{\lfloor\frac{H}{2s+1}\rfloor}{\lfloor\frac{H}{2s+1}\rfloor-T_h}(2s+1).
	\end{equation}
\end{corollary}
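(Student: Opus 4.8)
The plan is to derive Corollary \ref{optimality} purely as an algebraic specialization of Theorem \ref{the2}, substituting the two extreme values of the hyperparameter $v$ and simplifying the floor expressions. The key observation is that $v$ ranges over values satisfying $\lfloor\frac{H}{v}\rfloor-\lfloor\frac{2s}{v}\rfloor-T_h\geq 1$, and the two endpoints $v=1$ and $v=2s+1$ correspond to the ``pure coding'' and ``pure masking'' regimes of LCM described in the second Remark. Since the bounds in \eqref{eqUp2}--\eqref{eqDown2} are already established, all that remains is careful bookkeeping.

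First I would treat $v=1$. Here $\lfloor\frac{H}{v}\rfloor=H$ and $\lfloor\frac{2s}{v}\rfloor=2s$, so $C_\textnormal{up}^\textnormal{LCM}=\frac{H}{H-2s-T_h}$, which matches the lower bound $\frac{H}{H-2s-T_h}$ in \eqref{eqUp2} exactly; hence the inequalities collapse and $C_\textnormal{up}^{*}=\frac{H}{H-2s-T_h}$, proving optimality. For the downlink bound, I substitute $v=1$ into $C_\textnormal{down}^\textnormal{LCM}$: the prefactor $\frac{\lfloor H/v\rfloor-\lfloor 2s/v\rfloor}{\lfloor H/v\rfloor-\lfloor 2s/v\rfloor-T_h}$ becomes $\frac{H-2s}{H-2s-T_h}$. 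Inside the parenthesis, I need the binomial $\binom{\lfloor H/v\rfloor-\max\{s-\lfloor H/v\rfloor(a-1),0\}}{\lfloor H/v\rfloor-\lfloor 2s/v\rfloor}$; with $v=1$ the number of groups is $a=H$ (each group a single server), so $s-\lfloor H/v\rfloor(a-1)=s-H(H-1)$ which is negative, making the $\max$ equal to $0$, and the binomial reduces to $\binom{H-s}{s}$ (using $\lfloor H/v\rfloor - \lfloor 2s/v\rfloor = H-2s$ and the symmetry $\binom{H}{H-2s}=\binom{H}{2s}$... here I must be careful: actually $\binom{H-0}{H-2s}=\binom{H}{2s}$, but the stated bound has $\binom{H-s}{s}$, so I need to re-examine how $a$ and the $\max$ term interact — this is the one spot requiring genuine attention). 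Tracking through, $E-1-\lceil\frac{E-1}{\binom{H-s}{s}}\rceil+v = E-1-\lceil\frac{E-1}{\binom{H-s}{s}}\rceil+1 = E-\lceil\frac{E-1}{\binom{H-s}{s}}\rceil$, yielding \eqref{ourdo1}.

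Next I would handle $v=2s+1$. Then $\lfloor\frac{2s}{v}\rfloor=\lfloor\frac{2s}{2s+1}\rfloor=0$, so the prefactor $\frac{\lfloor H/v\rfloor-\lfloor 2s/v\rfloor}{\lfloor H/v\rfloor-\lfloor 2s/v\rfloor-T_h}$ becomes $\frac{\lfloor H/(2s+1)\rfloor}{\lfloor H/(2s+1)\rfloor-T_h}$, and $C_\textnormal{up}^\textnormal{LCM}=\frac{\lfloor H/(2s+1)\rfloor(2s+1)}{\lfloor H/(2s+1)\rfloor-T_h}$, giving \eqref{ourUp2}. For \eqref{ourdo2}, I claim that when $v=2s+1$ the binomial coefficient in $C_\textnormal{down}^\textnormal{LCM}$ equals $1$: indeed $\lfloor H/v\rfloor-\lfloor 2s/v\rfloor=\lfloor H/(2s+1)\rfloor$, and the $\max$ term $\max\{s-\lfloor H/v\rfloor(a-1),0\}$ — with $v=2s+1$ and $a=\lfloor H/(2s+1)\rfloor$ groups — needs checking, but if it evaluates so that the top of the binomial equals $\lfloor H/(2s+1)\rfloor$, then $\binom{\lfloor H/(2s+1)\rfloor}{\lfloor H/(2s+1)\rfloor}=1$, so $\lceil\frac{E-1}{1}\rceil=E-1$ and the parenthesis collapses to $E-1-(E-1)+v=v=2s+1$, yielding \eqref{ourdo2}. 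The main obstacle, and the only non-routine part, is verifying the behavior of the $\max\{s-\lfloor H/v\rfloor(a-1),0\}$ term and the precise value of $a$ at the two endpoints so that the binomial coefficients simplify to $\binom{H-s}{s}$ and $1$ respectively; this requires unpacking the exact relationship between $v$, $a$, and the group structure of LCM from Section \ref{LCM}, which I would pin down by cross-referencing the scheme description before committing to the simplifications above.
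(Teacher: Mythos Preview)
Your approach is correct and is exactly what the paper intends: the corollary is stated without proof as a direct specialization of Theorem~\ref{the2}, obtained by plugging in $v=1$ and $v=2s+1$ and simplifying. The only real snag in your write-up is the identification of the symbol~$a$. You guessed $a=\lfloor H/v\rfloor$ (the number of groups), which led you to the wrong binomial $\binom{H}{2s}$ and the self-flagged inconsistency. In fact $a$ is simply a typographical slip for $v$: compare the expression $\max\{s-\lfloor H/v\rfloor(a-1),0\}$ in the statement of Theorem~\ref{the2} with the identical expression $\max\{s-\lfloor H/v\rfloor(v-1),0\}$ appearing in the Aggregation paragraph of Section~\ref{LCM} (and note the parallel slip ``remove $(H-\lfloor H/v\rfloor a)$ servers'' earlier in that section).

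Once you set $a=v$, both endpoints simplify cleanly. For $v=1$: $\max\{s-H\cdot 0,0\}=s$, so the binomial is $\binom{H-s}{H-2s}=\binom{H-s}{s}$, and the parenthesis becomes $E-1-\lceil (E-1)/\binom{H-s}{s}\rceil+1=E-\lceil (E-1)/\binom{H-s}{s}\rceil$, giving \eqref{ourdo1}. For $v=2s+1$: since $s<H/2$ forces $\lfloor H/(2s+1)\rfloor\geq 1$, we get $s-\lfloor H/(2s+1)\rfloor\cdot 2s\leq s-2s<0$, so the $\max$ is $0$, the binomial is $\binom{\lfloor H/(2s+1)\rfloor}{\lfloor H/(2s+1)\rfloor}=1$, and the parenthesis collapses to $E-1-(E-1)+(2s+1)=2s+1$, giving \eqref{ourdo2}. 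Your computations for $C_\textnormal{up}^\textnormal{LCM}$ at both endpoints and for the optimality at $v=1$ are already correct as written.
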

	It can be seen that when the number of clients $E$ is large enough, the downlink communication load of the LCM scheme with $v=1$ is higher than that of the LCM scheme with $v=2s+1$, however, the LCM scheme with $v=1$ can achieve the optimal uplink communication load. As the collusion number $T_h$ increases, the lower bounds of the optimal uplink and downlink communication loads also increase. Interestingly, the number of collusion clients $T_c$ does not affect both the lower and upper bounds of the communication loads.
\begin{Remark} 
	Note that the parameter ${v}$ satisfies $\lfloor\frac{H}{v}\rfloor-\lfloor\frac{2s}{v}\rfloor-T_h\geq 1$. In other words, if the parameter ${v}$ is fixed, for the LCM scheme, the number of colluding servers $T_h$ should satisfy $T_h\leq \lfloor\frac{H}{v}\rfloor-\lfloor\frac{2s}{v}\rfloor-1$. Particularly, if $v=1$, we have $T_h\leq H-2s-1$, this means LCM scheme present in Section \ref{LCM} can be resistant to the most colluding servers.
	
\end{Remark}}

\section{Secure Aggregation Scheme}
\subsection{Lagrange Coding with Mask (LCM)}\label{LCM}
First, we divide the $H$ servers into $\lfloor \frac{H}{v}\rfloor$ groups $\mathcal{G}_1,\cdots,\mathcal{G}_{\lfloor \frac{H}{v}\rfloor}$ of $v$ servers each. If $v$ does not divide $H$, we remove $(H-\lfloor \frac{H}{v}\rfloor a)$ servers.\\
\textbf{Encoding at the clients:} Each pair of clients $(i,j), i<j$ should agree on some random vector $\boldsymbol{s}_{ij}\in \mathbb{F}^p$. Then, Client $i\in[E]$ computes:
\begin{equation}\label{mask12}
\boldsymbol{y}_i=\boldsymbol{g}_i+\sum_{j\in [E], i<j}\boldsymbol{s}_{ij}-\sum_{j\in [E], i>j}\boldsymbol{s}_{ji} \ \ (\textnormal{mod}\
|\mathbb{F}|).
\end{equation}

Client $i$ partitions its masking gradient update into $k$ components as $\boldsymbol{y}_{i}=[\boldsymbol{y}_{i,1}^{T},\cdots ,\boldsymbol{y}_{i,k}^{T}]^{T}$ where  $\boldsymbol{y}_{i,r}^{T}\in\mathbb{F}^{\frac{p}{k}}$ for $r\in[k]$ and $k=\lfloor \frac{H}{v}\rfloor-\lfloor \frac{2s}{v}\rfloor-T_h$. Then, Client $i$ randomly chooses $T_h$ vectors $\boldsymbol{Z}_{i,1},\cdots, \boldsymbol{Z}_{i,T_h}$ from $\mathbb{F}^{\frac{p}{k}}$. Similar to the LCC, Client $i$ needs to construct a polynomial $u_i: \mathbb{F}\rightarrow\mathbb{F}^{\frac{p}{k}}$. We can select any $k+T_h$ distinct elements $\beta_1,\cdots,\beta_{k+T_h}$ from $\mathbb{F}$ such that $u_i(\beta_r)=\boldsymbol{y}_{i,r}$, for any $r\in[k]$ and $u_i(\beta_r)=\boldsymbol{Z}_{i,r-k}$ for any $r\in\{k+1,\cdots,k+T_h\}$. By using the Lagrange interpolation polynomial, Client $i$ can compute
\begin{equation}
\begin{aligned}
u_i(x)\triangleq &\sum_{r=1}^{k} \boldsymbol{y}_{i,r}\cdot\prod_{l\in[k+T_h]\backslash \{r\}}\frac{x-\beta_l}{\beta_r-\beta_l}+\\
&\sum_{r=k+1}^{k+T_h} \boldsymbol{Z}_{i,r-k}\cdot\prod_{l\in[k+T_h]\backslash \{r\}}\frac{x-\beta_l}{\beta_r-\beta_l}.
\end{aligned}
\end{equation}
We then select $\lfloor \frac{H}{v}\rfloor$ distinct elements $\alpha_j,j\in[\lfloor \frac{H}{v}\rfloor]$ from $\mathbb{F}$ where we assume that $\{\alpha_j\}_{j\in[\lfloor \frac{H}{v}\rfloor]}\cap\{\beta_r\}_{r\in[k+T_h]}=\emptyset$. Then, Client $i$ can encode its gradient as $\{\hat{\boldsymbol{g}}_{i,j}\triangleq u_i(\alpha_j)\}_{j\in[\lfloor \frac{H}{v}\rfloor]}$. Finally, Clietnt $i$ sends $\hat{\boldsymbol{g}}_{i,j}$ to all servers in the Group $\mathcal{G}_j$.

We note that the encoded encrypted data sent to servers is linear combinations of $\{\boldsymbol{y}_{i,r}\}_{r\in[k]}$ and $\{\boldsymbol{Z}_{i,r}\}_{r\in[T_h]}$, i.e.,
\begin{equation*}
\begin{pmatrix}
u_i(\alpha_1)\\
u_i(\alpha_2)\\
\vdots\\
u_i(\alpha_{\lfloor \frac{H}{v}\rfloor})\\
\end{pmatrix}=
\begin{pmatrix}
U_{11}&\cdots&U_{1(k+T_h)}\\
U_{21}&\cdots&U_{2(k+T_h)} \\
\vdots&\ddots&\vdots\\
U_{\lfloor \frac{H}{v}\rfloor1}&\cdots&U_{\lfloor \frac{H}{v}\rfloor(k+T_h)}\\
\end{pmatrix}
\begin{pmatrix}
\boldsymbol{y}_{i,1}\\
%\boldsymbol{y}_{i,2}\\
\vdots\\
\boldsymbol{y}_{i,k}\\
\boldsymbol{Z}_{i,1}\\
%\boldsymbol{Z}_{i,2}\\
\vdots\\
\boldsymbol{Z}_{i,T_h}\\
\end{pmatrix},
\end{equation*}
where $U_{jr}\triangleq\prod_{l\in[k+T_h]\backslash \{r\}}\frac{\alpha_j-\beta_l}{\beta_r-\beta_l}.$ By the construction, we have the following result for the uplink communication load:
\begin{equation}
C_\textnormal{up}=\lfloor \frac{H}{v}\rfloor\frac{v}{k}=\frac{\lfloor \frac{H}{v}\rfloor v}{\lfloor \frac{H}{v}\rfloor-\lfloor \frac{2s}{v}\rfloor-T_h}.
\end{equation}
{\begin{Remark}
	The proposed method  necessitates secure exchange of $p$-dimensional random vectors $\boldsymbol{s}_{ij}$ between each pair of clients $(i,j)$. An established approach to overcome this challenge is information-theoretic secure key agreement through accessing correlated random variables \cite{256484}. In practice, communication overhead can be minimized by having each pair of clients agree on a common seed for a pseudo-random generator (PRG). As a result, Client $i$ and $j$ can generate the mask vector $\boldsymbol{s}_{ij}$ based on their shared seed, thus negating the need to directly agree on $\boldsymbol{s}_{ij}$. A random seed for each pair of clients $(i,j)$ can be agreed upon using a key exchange protocol, such as the Diffie-Hellman key agreement \cite{DHKEY76}. Notably, the PRG-generated random vectors are computationally indistinguishable from authentic uniformly distributed random vectors. Hence, the proposed scheme in this paper can be demonstrated to provide computational security, supported by corresponding performance guarantees.
\end{Remark}}
\textbf{Aggregation:} According to the properties of Lagrange Coded Computing, the coefficients of $u_i(x)$ can be calculated from any $k+T_h$ messages in $\{u_i(\alpha_j)\}_{j\in[\lfloor \frac{H}{v}\rfloor]}$. Given any  $ m\in\Omega(s)$, for each Client $i$, there are at least $\max \{s-\lfloor\frac{H}{v}\rfloor (v-1),0\}$ groups that cannot connect to Client $j$, that is, all servers in these groups cannot connect to  Client $i$. On the other hand, among the remaining $\lfloor \frac{H}{v}\rfloor-\max\{s-\lfloor\frac{H}{v}\rfloor (v-1),0\}$ groups, we can always find $\lfloor \frac{H}{v}\rfloor-\lfloor \frac{2s}{v}\rfloor$ groups that can connect Client $i$ and Client $j$ at the same time, that is, there are some servers in these $\lfloor \frac{H}{v}\rfloor-\lfloor \frac{2s}{v}\rfloor$  groups that receive messages from Client $i$ and Client $j$ at the same time. Then, each client find the maximum number $M$ of other clients whose messages are received by the same $\lfloor \frac{H}{v}\rfloor-\lfloor \frac{2s}{v}\rfloor$ groups. We use $\mathcal{M}$ to denote these $M$ clients. Then, servers in these groups aggregate coded messages from $\mathcal{M}$ and send them to Client $i$. Thus, Client $i$ can recover $\sum_{i\in \mathcal{M}}{\boldsymbol{y}_{i}}$. For the remaining $(E-M-1)$ clients' gradient, the corresponding $\lfloor \frac{H}{v}\rfloor-\lfloor \frac{2s}{v}\rfloor$ servers simply forward their messages to the Client $i$. We can bound $M$ by using the balls and bins problem. If we have $E-1$ balls denoting $E-1$ clients and any ${\lfloor \frac{H}{v}\rfloor-\lfloor \frac{2s}{v}\rfloor}$ groups can be seen as a bin. There are ${{\lfloor \frac{H}{v}\rfloor-\max\{s-\lfloor\frac{H}{v}\rfloor (v-1),0 \}} \choose {\lfloor \frac{H}{v}\rfloor-\lfloor \frac{2s}{v}\rfloor}}$ such bins. So, there must be a bin with at least $\frac{E-1}{{{\lfloor \frac{H}{v}\rfloor-\max\{s-\lfloor\frac{H}{v}\rfloor (v-1),0 \}} \choose {\lfloor \frac{H}{v}\rfloor-\lfloor \frac{2s}{v}\rfloor}}}$ balls, i.e., $M\geq \lceil\frac{E-1}{{{\lfloor \frac{H}{v}\rfloor-\max\{s-\lfloor\frac{H}{v}\rfloor (v-1),0 \}} \choose {\lfloor \frac{H}{v}\rfloor-\lfloor \frac{2s}{v}\rfloor}}}\rceil$. Finally, we obtain the following bound:
\begin{equation*}
C_\textnormal{down}\leq \frac{\lfloor \frac{H}{v}\rfloor-\lfloor \frac{2s}{v}\rfloor}{k}(E-1-\lceil\frac{E-1}{{{\lfloor \frac{H}{v}\rfloor-\max\{s-\lfloor\frac{H}{v}\rfloor (v-1),0 \}} \choose {\lfloor \frac{H}{v}\rfloor-\lfloor \frac{2s}{v}\rfloor}}}\rceil+v).
\end{equation*}	

\textbf{Example 1.} We consider an example with $v=1$, $E=4$, $H=6$, $T_h=2$, $s=1$, and $T_c=2$. 
\begin{figure}[htbp]
	\centering
	\subfigure[Messages table]{
		\includegraphics[width=0.466\linewidth]{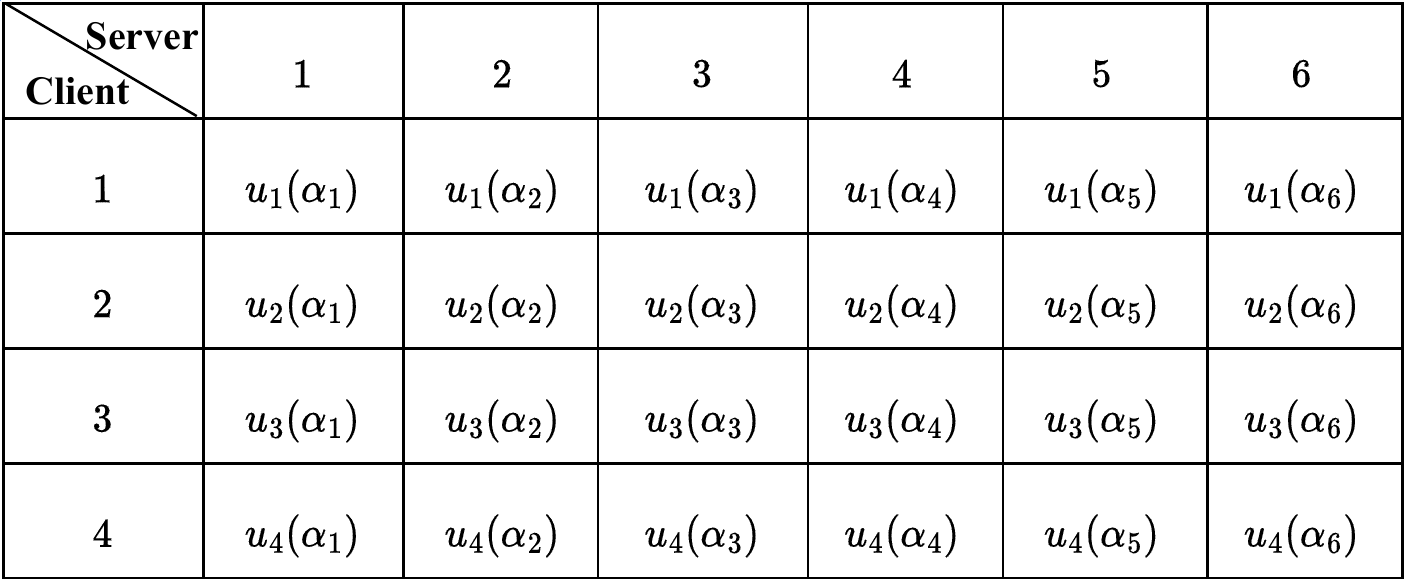}
		%\caption{fig1}
	}
	\subfigure[Failure scenario 1]{
		\includegraphics[width=0.466\linewidth]{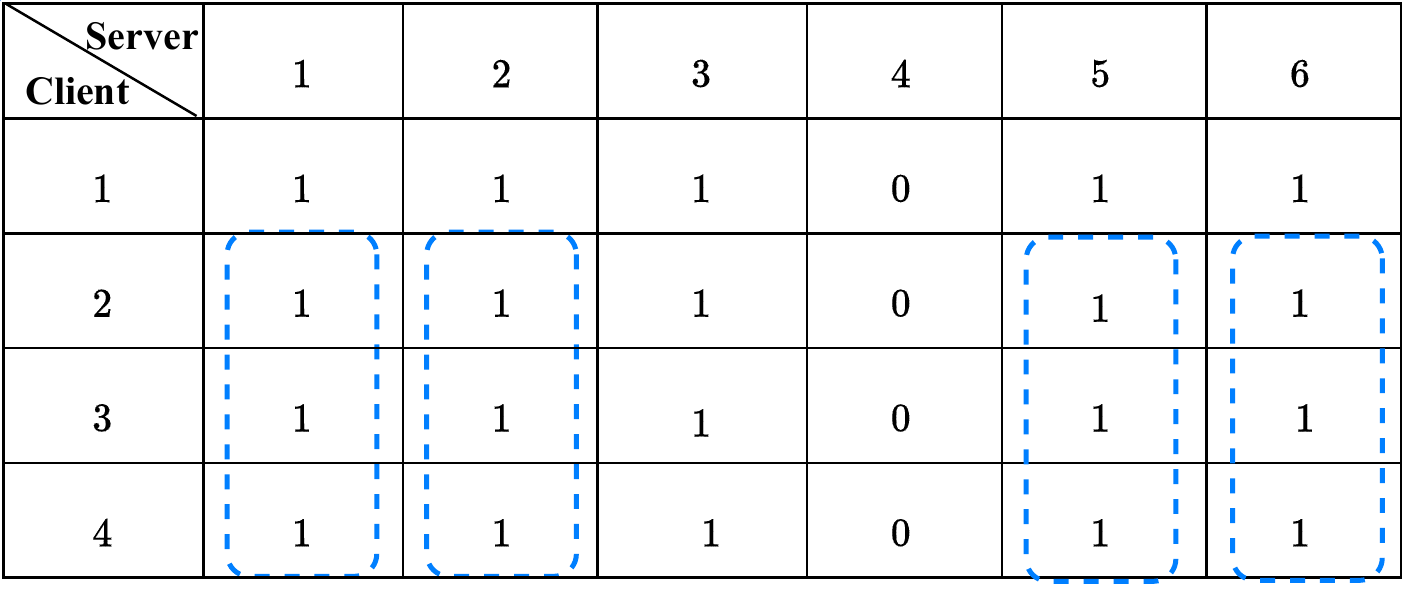}
	}
	\quad
	\subfigure[Failure scenario 2]{
		\includegraphics[width=0.466\linewidth]{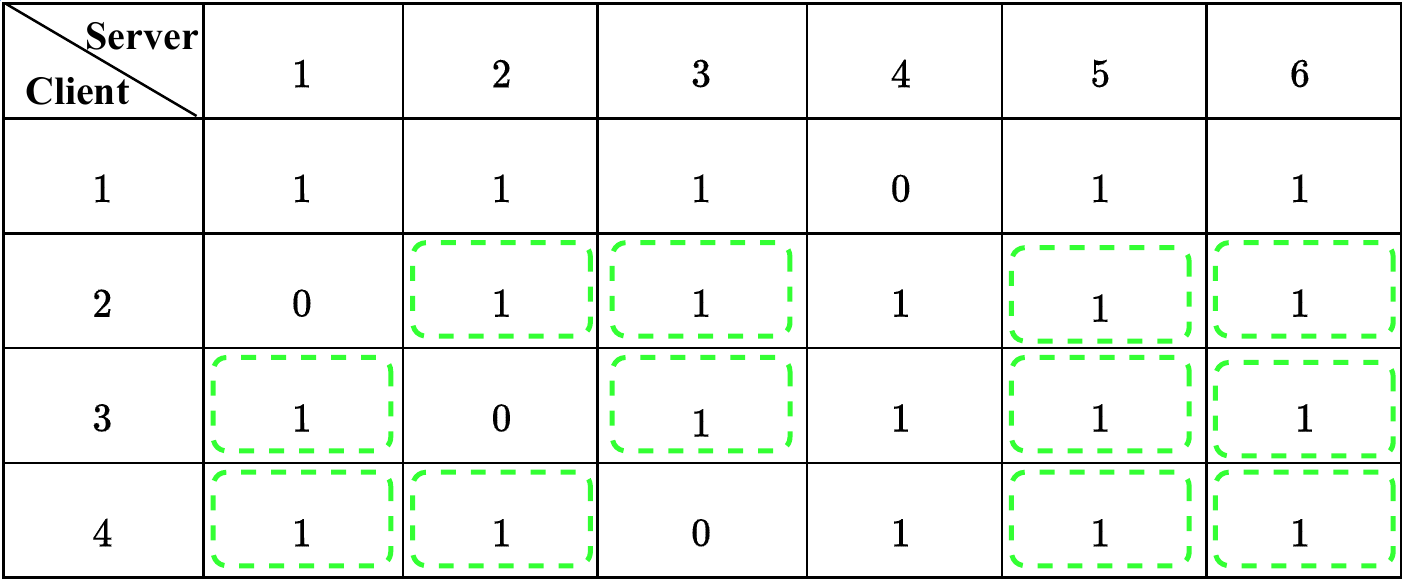}
	}
	\subfigure[Failure scenario 3]{
		\includegraphics[width=0.466\linewidth]{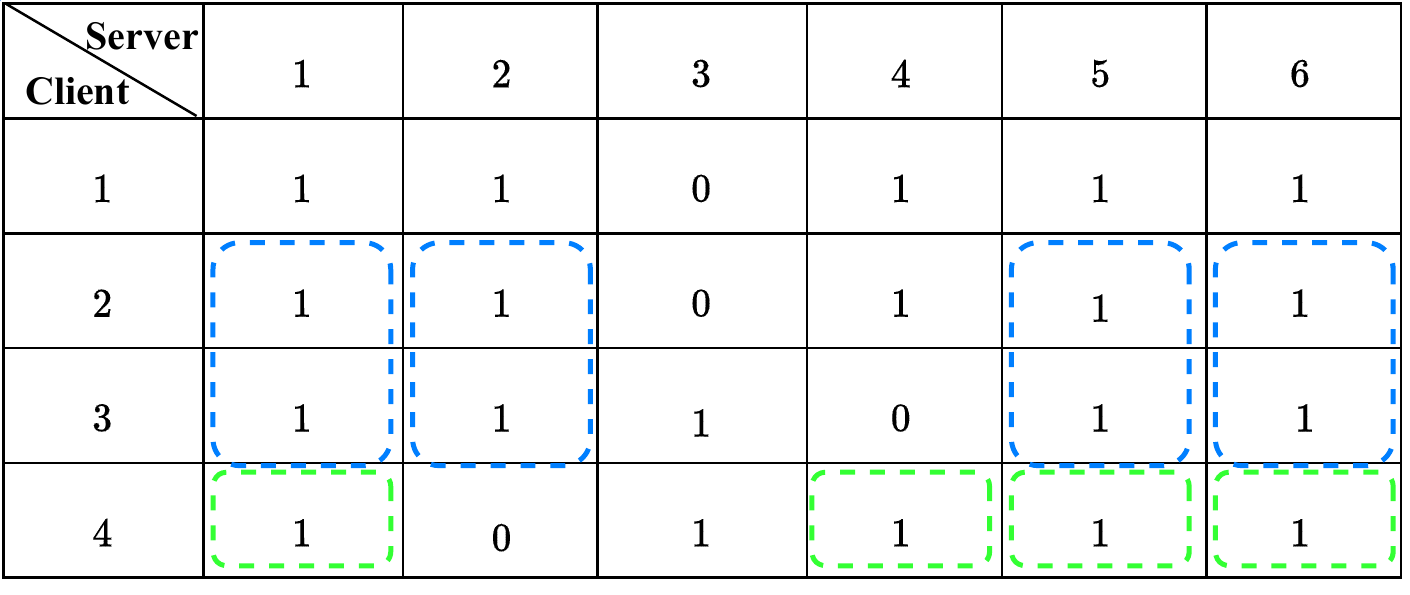}
	}
	\caption{Messages table and Failure table}
	\label{fig}
\end{figure}
Each Client $i\in[4]$ computes $\boldsymbol{y}_i$ as (\ref{mask12}) and generates two random vectors $\boldsymbol{Z}_{i,1}$ and $\boldsymbol{Z}_{i,2}$. Then, each Client $i\in[4]$ partitions its masking gradient updateinto $2$ components as $\boldsymbol{y}_{i}=[\boldsymbol{y}_{i,1}^{T},\boldsymbol{y}_{i,2}^{T}]^{T}$. Let $\beta_1=1,\beta_2=2, \beta_3=3,\beta_4=4$ and $\alpha_1=5,\alpha_2=6, \alpha_3=7,\alpha_4=8,\alpha_5=9,\alpha_6=10$, each Client $i\in[4]$ can compute
\begin{equation*}
\begin{pmatrix}
u_i(\alpha_1)\\
u_i(\alpha_2)\\
u_i(\alpha_3)\\
u_i(\alpha_4)\\
u_i(\alpha_5)\\
u_i(\alpha_6)\\
\end{pmatrix}=
\begin{pmatrix}
-1&4&-6&4\\
-4&15&-20&10\\
-10&36&-45&20\\
-20&70&-84&35\\
-35&120&-140&56\\
-56&-189&-216&84\\
\end{pmatrix}
\begin{pmatrix}
\boldsymbol{y}_{i,1}\\
\boldsymbol{y}_{i,2}\\
\boldsymbol{Z}_{i,1}\\
\boldsymbol{Z}_{i,2}\\
\end{pmatrix}.
\end{equation*}
It is easy to calculate the uplink communication load of each client as $C_\textnormal{up}=3$. Then, clients find the maximum number $M$ of other clients whose messages are received by the same $4$ servers. The same decoding criterion is applied to partially aggregated components, i.e., for any partial aggregation at a group, partial aggregations over the same clients (rows in the failure table) at other groups must be received by clients. The entries within the rows (highlighted in blue) are partially aggregated at the respective servers and sent to clients. The remaining entries (highlighted in green) are simply forwarded to clients.

In Fig. 2(b), it is sufficient for Servers $1,2,5,6$ to aggregate their received messages to obtain $\boldsymbol{g}_{D,1}$ and $\boldsymbol{g}_{D,2}$. In Fig. 2(c), the servers cannot do partial aggregation, so all entries are simply sent to Client $1$. In Fig. 2(d), Server $1,2,5$ and $6$ aggregate their received messages to obtain $u_2(5)+u_3(5)$, $u_2(6)+u_3(6)$, $u_2(9)+u_3(9)$ and $u_2(10)+u_3(10)$ respectively and send them to Client $1$. For the gradient of Client $4$, Server $1,4,5$ and $6$ send $u_4(5)$, $u_4(8)$, $u_4(9)$ and $u_4(10)$ to Client $1$, respectively.
For Client $1$, the downlink communication loads in the three scenarios are $2,6$ and $4$, respectively.

\textbf{Example 2.} We consider an example with $v=3$, $E=4$, $H=6$, $T_h=1$, $s=1$ and $T_c=2$. 
\begin{figure}[htbp]
	\centering
	\subfigure[Messages table]{
		\includegraphics[width=0.466\linewidth]{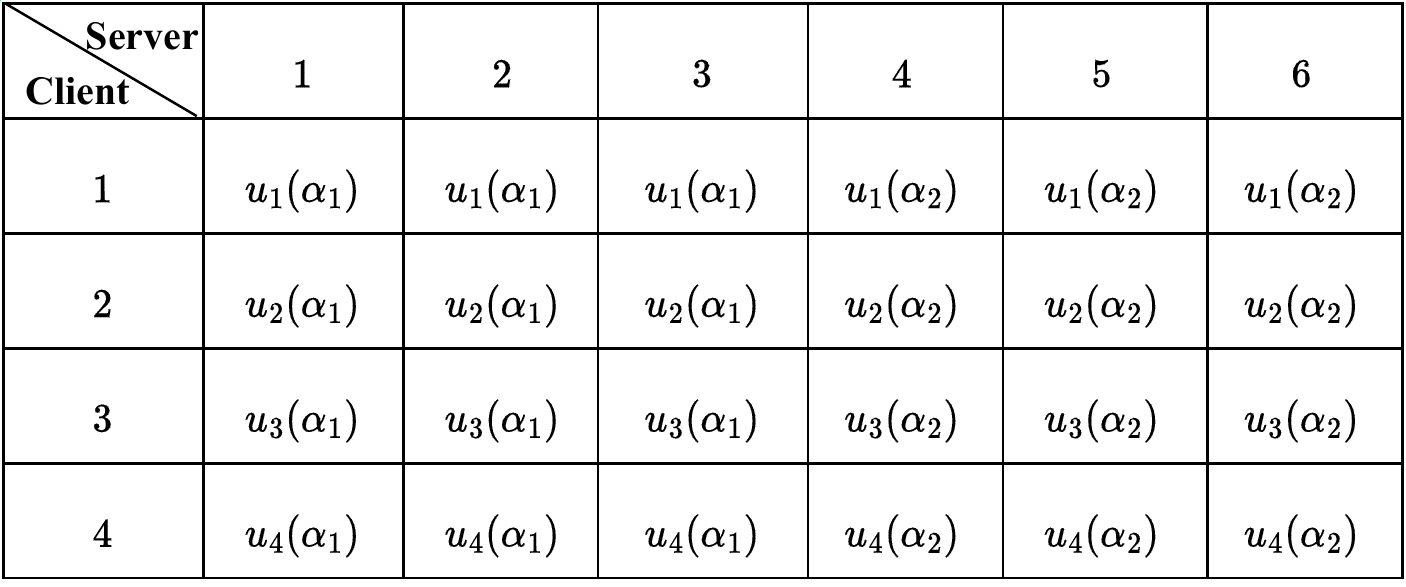}
		%\caption{fig1}
	}
	\subfigure[Failure scenario 1]{
		\includegraphics[width=0.466\linewidth]{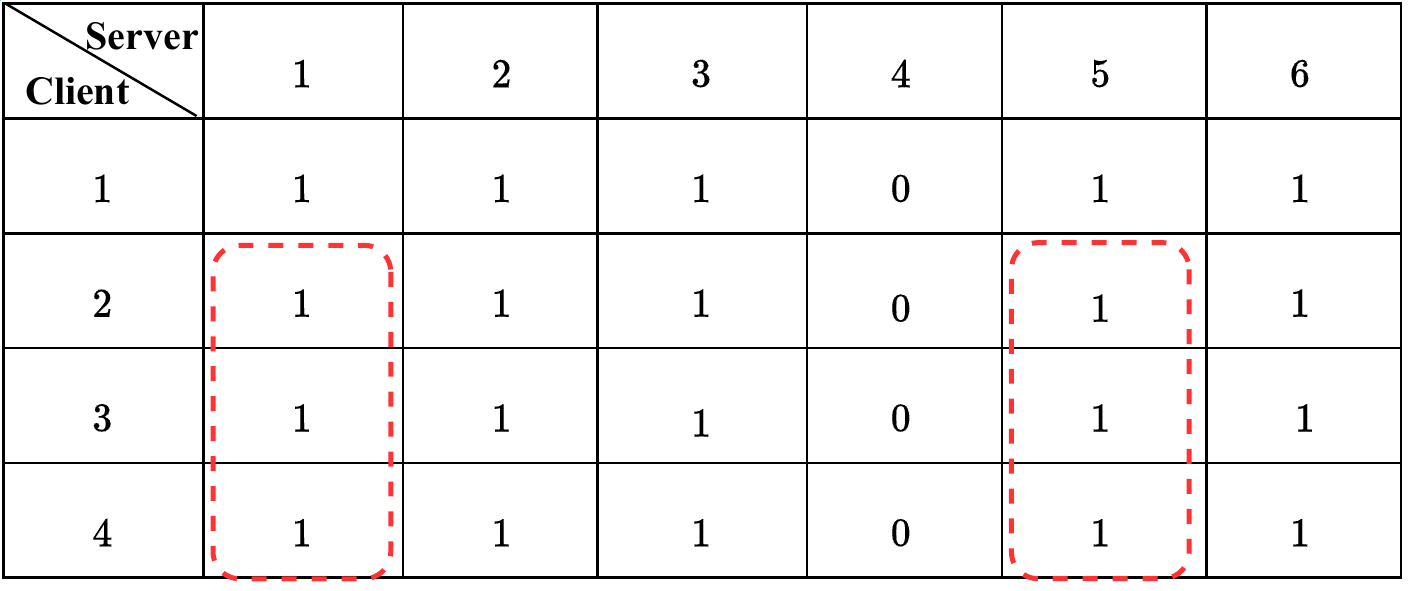}
	}
	\quad
	\subfigure[Failure scenario 2]{
		\includegraphics[width=0.466\linewidth]{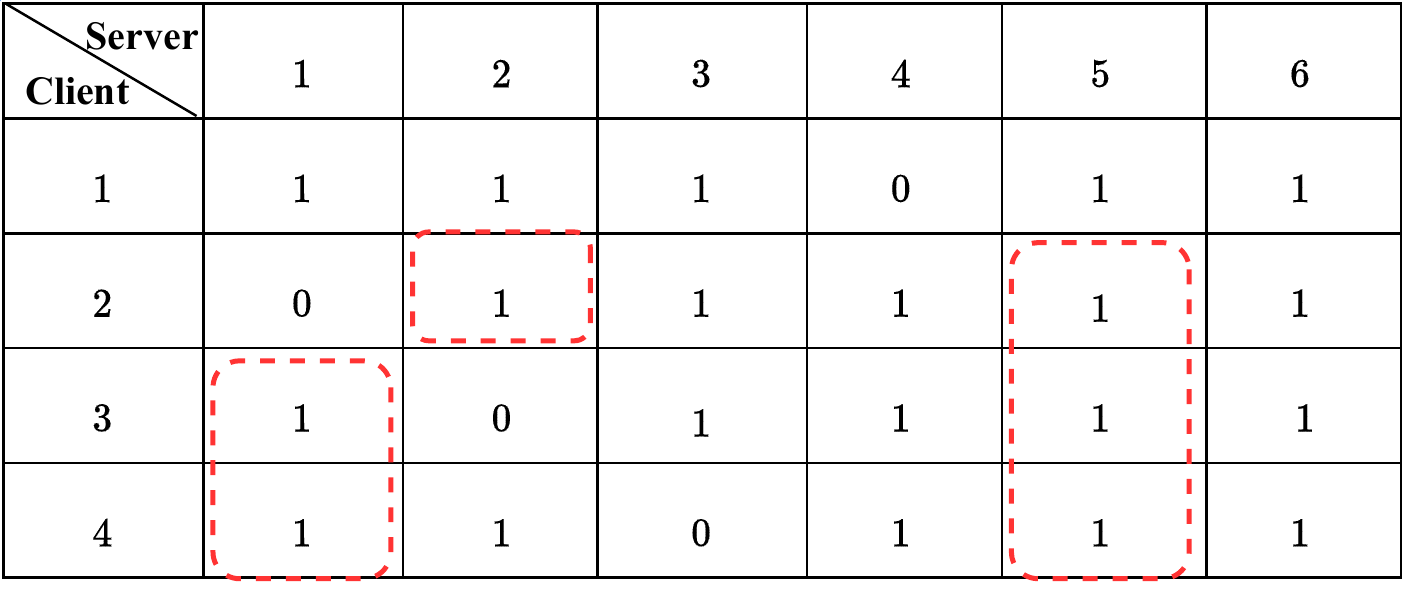}
	}
	\subfigure[Failure scenario 3]{
		\includegraphics[width=0.466\linewidth]{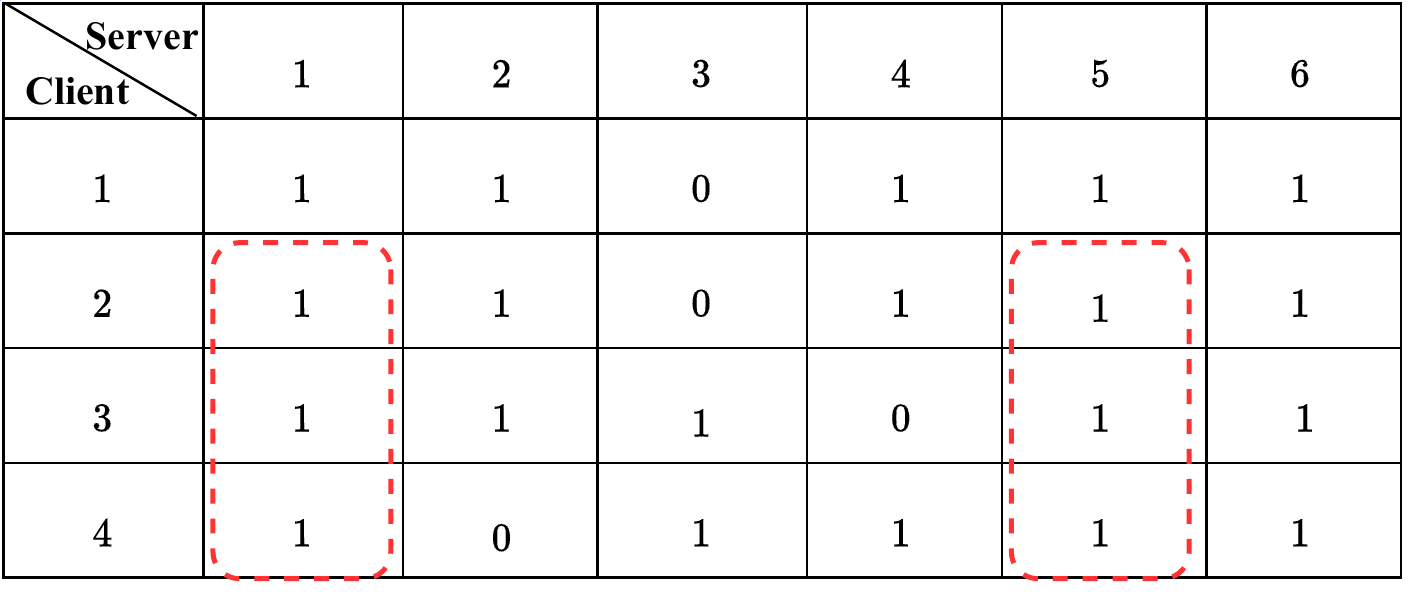}
	}
	\caption{Messages table and Failure table}
	\label{fig}
\end{figure}
Each Client $i\in[4]$ computes $\boldsymbol{y}_i$ as (\ref{mask12}) and generates a random vectors $\boldsymbol{Z}_{i}$. Let $\beta_1=1,\beta_2=2$ and $\alpha_1=3,\alpha_2=4$, each Client $i\in[4]$ can compute $u_i(3)=-\boldsymbol{y}_i+2\boldsymbol{Z}_i$ and $u_i(3)=-2\boldsymbol{y}_i+3\boldsymbol{Z}_i$.
It is easy to calculate the uplink communication load of each client as $C_\textnormal{up}=6$.

In Fig. 3(b) and 3(d), Server $1$ and Server $5$ aggregate their received messages to obtain $u_2(3)+u_3(3)+u_4(3)$ and $u_2(4)+u_3(4)+u_4(4)$, respectively and send them to Client $1$. In Fig. 3(c), Server $1$ and Server $2$ send $u_3(3)+u_4(3)$ and $u_2(3)$ to Client $1$, respectively and Server $5$ sends $u_2(4)+u_3(4)+u_4(4)$. For Client $1$, the downlink communication loads in the three scenarios are $2,3$ and $2$, respectively.
\section{Conclusion}
In this paper, we consider the multi-server secure aggregation problem with unreliable communication links. For the investigated threat model defined by using Shannon’s information-theoretic security framework, we characterize the lower bounds of uplink and downlink communication loads. We also proposed a novel scheme called Lagrange Coding with Mask (LCM) which achieves a trade-off between the uplink and downlink communication loads by adjusting the number of servers in each group. In particular, when there was only one server in each group, LCM achieved the optimal uplink communication load, but at the cost of a higher downlink communication load.
\bibliographystyle{IEEEtran.bst}
\bibliography{1.bib}
\newpage
\appendix
\subsection{Converse Proof}\label{SecApp}

We first present a useful lemma for the converse proof.
\begin{lemma}\label{Lema1}
	\emph{Let $X^{n}$ be $n$-dimensional discrete random vector,  $J\subset[n]$ be a random variable independent of $X^{n}$ with $|J|=k$. Denote $X_J\triangleq(X_j:j\in J)$,  and denote 
		$H(X_{J}|J)$ the average entropy of a $k$-subset of coordinates.  %where $|J|$ denotes the number of elements of setting $J$. 
		Then \begin{equation}\label{lem}
		H(X_{J}|J)\leq\max_{i\in[n]}\mathbb{P}[j\in J]\sum\nolimits_{j=1}^{n}{H(X_{j})}.
		\end{equation}}
\end{lemma}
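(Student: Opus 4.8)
The plan is to expand the conditional entropy as an average over the choice of subset, apply subadditivity of entropy coordinate‑wise, and then swap the order of summation to convert the per‑subset bound into a per‑coordinate bound weighted by inclusion probabilities.

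Concretely, first I would write
\begin{equation*}
H(X_J\mid J)=\sum_{\mathcal{J}\subset[n],\,|\mathcal{J}|=k}\mathbb{P}[J=\mathcal{J}]\,H(X_{\mathcal{J}}\mid J=\mathcal{J}).
\end{equation*}
Here the independence of $J$ and $X^n$ is exactly what lets me replace $H(X_{\mathcal{J}}\mid J=\mathcal{J})$ by the unconditioned $H(X_{\mathcal{J}})$, since conditioning on the event $\{J=\mathcal{J}\}$ does not change the joint law of $(X_j)_{j\in\mathcal{J}}$. Next, by subadditivity of Shannon entropy, $H(X_{\mathcal{J}})\le\sum_{j\in\mathcal{J}}H(X_j)$ for every fixed $\mathcal{J}$.

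Substituting this bound and interchanging the two finite sums gives
\begin{equation*}
H(X_J\mid J)\le\sum_{\mathcal{J}}\mathbb{P}[J=\mathcal{J}]\sum_{j\in\mathcal{J}}H(X_j)
=\sum_{j=1}^{n}H(X_j)\sum_{\mathcal{J}\ni j}\mathbb{P}[J=\mathcal{J}]
=\sum_{j=1}^{n}\mathbb{P}[j\in J]\,H(X_j).
\end{equation*}
Finally, bounding each inclusion probability by its maximum, $\mathbb{P}[j\in J]\le\max_{i\in[n]}\mathbb{P}[i\in J]$, and pulling that constant out of the sum yields the claimed inequality \eqref{lem}.

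There is no real obstacle here; the only things to be careful about are (i) invoking independence correctly so that the conditional entropies reduce to marginal entropies, and (ii) justifying the swap of summation order, which is immediate because both index sets are finite. If a quantitative value of $\max_{i}\mathbb{P}[i\in J]$ is wanted later (e.g. in the converse, where $J$ is a uniformly random $k$‑subset of $[n]$ so that $\mathbb{P}[i\in J]=k/n$ for all $i$), it can simply be substituted at the end.
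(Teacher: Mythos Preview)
Your proof is correct and follows essentially the same route as the paper: expand the conditional entropy over subsets, use independence of $J$ and $X^n$ together with subadditivity (the paper writes this as chain rule plus ``conditioning reduces entropy''), swap the order of summation to obtain $\sum_{j=1}^{n}\mathbb{P}[j\in J]\,H(X_j)$, and then bound each inclusion probability by the maximum. The only cosmetic difference is that you invoke independence before subadditivity while the paper carries the conditioning on $J=J_i$ one step longer; the intermediate and final bounds are identical.
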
  
%The proof of Lemma \ref{Lema1} is given below:
\begin{proof}
	\begin{IEEEeqnarray*}{rCl}
		H(X_{J}|J)&\overset{(a)}{=}&\sum_{i=1}^{n\choose k}{\mathbb{P}(J=J_{i})H(X_{J_i}|J=J_{i})}\\
		&{=}&\sum_{i=1}^{n\choose k}{\mathbb{P}(J=J_{i})H(X_{i_{1}},\cdots,X_{i_{|J|}}|J=J_{i})}\\
		&\overset{(b)}{=}&\sum_{i=1}^{n\choose k}{\mathbb{P}(J=J_{i})}\sum_{j=1}^{|J|}{H(X_{i_{j}}|X_{i_{1}},\cdots,X_{i_{j\!-\!1}},J\!=\!J_{i})}\\
		&\overset{(c)}{\leq}&\sum_{i=1}^{n\choose k}{\mathbb{P}(J=J_{i})}\sum_{j=1}^{|J|}{H(X_{i_{j}}|J=J_{i})}\\
		&\overset{(d)}{=}&\sum_{u=1}^{n}{(\sum_{u\ \in\ J_{i}}{\mathbb{P}(J=J_{i}))}H(X_{u})}\\
		&{=}&\sum_{u=1}^{n}{\mathbb{P}(u\in J)H(X_{u})}\\
		&\leq&\max_{u\in[n]}{\mathbb{P}(u\in J)}\sum_{u=1}^{n}{H(X_{u})},
	\end{IEEEeqnarray*}
	where $(a)$ holds by the definition of conditional entropy; $(b)$ holds by the chain rule of entropy; $(c)$ holds by conditioning reduces entropy; in $(d)$, we rearrange each items ${\mathbb{P}(J=J_{i}))}H(X_{J_i}|J=J_{i})$ and use the fact that $J\subset[n]$ is independent of $X^{n}$.
\end{proof}

{We assume that each gradient $\boldsymbol{g}_{i}$ is a uniform distribution over the field $\mathbb{F}^p$ and $\{\boldsymbol{g}_i\}_{i\in[E]}$ are independent. The uniformity and independence of the gradients are required for the converse proof, but are not necessary for the achievability proof. To simplify the notation, we omit the superscript $m$.}

Next, we give the lower bounds of $C_\textnormal{up}^{*}$ and $C_\textnormal{down}^{*}$. First, we assume that $j_1,\ldots,j_{H-s}\in[H]$ denote the indices of the server that successfully send messages to Client $i$. Consider any $H-2s-T_h$ servers in $j_1,\ldots,j_{H-s}$ denoted by $\{j_{l_k}\}_{k\in[H-2s-T_h]}$ and the rest of the servers are denoted by $\{j_{l_{H-2s-T_h+1}},\cdots j_{l_{H-s}}\}$, we have 
\begin{align}\label{endeq2}
H&\big(W_i^{j_{l_1}},\cdots, W_i^{j_{l_{H-2s-T_h}}}|(\boldsymbol{g}_i, \boldsymbol{r}_i:i\in[E]\backslash\{t\})\big)\notag\\
&\ {\geq} H\big(W_i^{j_{l_1}},\cdots, W_i^{j_{l_{H-2s-T_h}}}|(\boldsymbol{g}_i, \boldsymbol{r}_i:i\in[E]\backslash\{t\}),\notag\\
&\qquad \{W_i^{j_{l}}\}_{l=l_{H-2s-T_h+1},\cdots,l_{H-s}}\big)\notag\\
&\ {=}I\big(W_i^{j_{l_1}},\cdots, W_i^{j_{l_{H-2s-T_h}}}; \boldsymbol{g}_t|(\boldsymbol{g}_i, \boldsymbol{r}_i:i\in[E]\backslash\{t\}),\notag\\
&\qquad \{W_i^{j_{l}}\}_{l=l_{H-2s-T_h+1},\cdots,l_{H-s}}\big)\notag\\
&\  +H\big(W_i^{j_{l_1}},\cdots, W_i^{j_{l_{H-2s-T_h}}}|\boldsymbol{g}_t, (\boldsymbol{g}_i, \boldsymbol{r}_i:i\in[E]\backslash\{t\}),\notag\\
&\qquad \{W_i^{j_{l}}\}_{l=l_{H-2s-T_h+1},\cdots,l_{H-s}}\big)\notag\\
&\ {\geq} I\big(W_i^{j_{l_1}},\cdots, W_i^{j_{l_{H-2s-T_h}}}; \boldsymbol{g}_t|(\boldsymbol{g}_i, \boldsymbol{r}_i:i\in[E]\backslash\{t\}),\notag\\
&\qquad \{W_i^{j_{l}}\}_{l=l_{H-2s-T_h+1},\cdots,l_{H-s}}\big)\notag\\
&\  +H\big(W_i^{j_{l_1}},\cdots, W_i^{j_{l_{H-2s-T_h}}}|(\boldsymbol{g}_i, \boldsymbol{r}_i:i\in[E]),\notag\\
&\qquad \{W_i^{j_{l}}\}_{l=l_{H-2s-T_h+1},\cdots,l_{H-s}}\big)\notag\\
&\overset{(a)}{=}I\big(W_i^{j_{l_1}},\cdots, W_i^{j_{l_{H-2s-T_h}}}; \boldsymbol{g}_t|(\boldsymbol{g}_i, \boldsymbol{r}_i:i\in[E]\backslash\{t\}),\notag\\
&\qquad \{W_i^{j_{l}}\}_{l=l_{H-2s-T_h+1},\cdots,l_{H-s}}\big)\notag\\
&\ {=}H\big(\boldsymbol{g}_t|(\boldsymbol{g}_i, \boldsymbol{r}_i:i\in[E]\backslash\{t\}), \{W_i^{j_{l}}\}_{l=l_{H-2s-T_h+1},\cdots,l_{H-s}}\big)\notag\\
&\ -H\big(\boldsymbol{g}_t|(\boldsymbol{g}_i, \boldsymbol{r}_i:i\in[E]\backslash\{t\}),\{W_i^{j_{l}}\}_{l=1,\cdots,H-s}\big)\nonumber\\
&\overset{(b)}{=}H\big(\boldsymbol{g}_t|(\boldsymbol{g}_i, \boldsymbol{r}_i:i\in[E]\backslash\{t\}), \{W_i^{j_{l}}\}_{l=l_{H-2s-T_h+1},\cdots,l_{H-s}}\big)\notag\\
&\ {=}H\big(\boldsymbol{g}_t|(\boldsymbol{g}_i, \boldsymbol{r}_i:i\in[E]\backslash\{t\})\big)\notag\\
&\ -I\big(\{W_i^{j_{l}}\}_{l=l_{H-2s-T_h+1},\cdots,l_{H-s}}; \boldsymbol{g}_t|(\boldsymbol{g}_i, \boldsymbol{r}_i:i\in[E]\backslash\{t\})\big)\notag\\
&\overset{(c)}{\geq} H\big(\boldsymbol{g}_t|(\boldsymbol{g}_i, \boldsymbol{r}_i:i\in[E]\backslash\{t\})\big)\notag\\
&\ -I\big(\{{\bf{u}}^{j_{l}}\}_{l=l_{H-2s-T_h+1},\cdots,l_{H-s}}; \boldsymbol{g}_t|(\boldsymbol{g}_i, \boldsymbol{r}_i:i\in[E]\backslash\{t\})\big)\notag\\
&\ {=}H\big(\boldsymbol{g}_t|(\boldsymbol{g}_i, \boldsymbol{r}_i:i\in[E]\backslash\{t\})\big)\notag\\
&\ -I\big(\{{\bf{u}}^{j_{l}}\}_{l=l_{H-2s-T_h+1},\cdots,l_{H-2s}}; \boldsymbol{g}_t|(\boldsymbol{g}_i, \boldsymbol{r}_i:i\in[E]\backslash\{t\})\big)\notag\\
&\ -I\big(\{{\bf{u}}^{j_{l}}\}_{l=l_{H-2s+1},\cdots,l_{H-s}}; \boldsymbol{g}_t|(\boldsymbol{g}_i, \boldsymbol{r}_i:i\in[E]\backslash\{t\}),\notag\\
&\qquad \{{\bf{u}}^{j_{l}}\}_{l=l_{H-2s-T_h+1},\cdots,l_{H-2s}}\big)\notag\\
&\overset{(d)}{=}H\big(\boldsymbol{g}_t|(\boldsymbol{g}_i:i\in[E]\backslash\{t\})\big)\notag\\
&\ -I\big(\{{\bf{u}}^{j_{l}}\}_{l=l_{H-2s+1},\cdots,l_{H-s}}; \boldsymbol{g}_t|(\boldsymbol{g}_i, \boldsymbol{r}_i:i\in[E]\backslash\{t\}),\notag\\
&\qquad \{{\bf{u}}^{j_{l}}\}_{l=l_{H-2s-T_h+1},\cdots,l_{H-2s}}\big),
\end{align}
where $(a)$ holds because $W_i^{j_{l_1}},\cdots, W_i^{j_{l_{H-2s-T_h}}}$ are functions of $(\boldsymbol{g}_i, \boldsymbol{r}_i:i\in[E])$; $(b)$ holds because Client $i$ can recover the sum of gradients $\boldsymbol{g}_D$ by $\{W_i^{j_{l}}\}_{l=1,\cdots,H-s}$; $(c)$ holds because $W_i^{j_l}$ is a function of ${\bf{u}}^{j_{l}}$ for $l=l_{H-2s-T_h+1},\cdots,l_{H-s}$; $(d)$ holds by privacy constraints \eqref{pr1}.

On the other hand, we have
\begin{align}\label{endeq3}
H&\big(W_i^{j_{l_1}},\cdots, W_i^{j_{l_{H-2s-T_h}}}|(\boldsymbol{g}_i, \boldsymbol{r}_i:i\in[E]\backslash\{t\})\big)\notag\\
&\leq H\big({\bf{u}}^{j_{l_1}}  ,\ldots,{\bf{u}}^{j_{l_{H-2s-T_h}}}|(\boldsymbol{g}_i,\boldsymbol{r}_i:i\in[E],i\neq t)\big)\notag\\
&\leq H\big(\{\mathcal{F}_t^{j_{l}}(\boldsymbol{g}_t,\boldsymbol{r}_t)\}_{l=l_1,\cdots,l_{H-2s-T_h}}|(\boldsymbol{g}_i,\boldsymbol{r}_i:i\in[E]\backslash\{t\})\big)\notag\\
&\leq H\big(\mathcal{F}_t^{j_{l_1}}(\boldsymbol{g}_t,\boldsymbol{r}_t),\ldots,\mathcal{F}_t^{j_{l_{H-2s-T_h}}}(\boldsymbol{g}_t,\boldsymbol{r}_t)\big).
\end{align} Note that \eqref{endeq3} holds for any $j_{l_1},\cdots,j_{l_{H-2s-T_h}}\in[H]$, we consider a random variable $J\triangleq(J_{l_1},\cdots,J_{l_{H-2s-T_h}})$, where $J_{l_1},\cdots,J_{l_{H-2s-T_h}}\in[H]$, and we assume that $\mathbb{P}(J=J_l)=\frac{1}{{H\choose H-2s-T_h}}$, i.e., $J$ means to select $H-2s-T_h$ numbers from $[H]$ with the same probability. Then, for some $j_{l_1},\cdots,j_{l_{H-2s-T_h}}\in[H]$, we have
\begin{align}\label{endeq4}
H&\big(W_i^{j_{l_1}},\cdots, W_i^{j_{l_{H-2s-T_h}}}|(\boldsymbol{g}_i, \boldsymbol{r}_i:i\in[E]\backslash\{t\})\big)\notag\\
&\leq H\big(\mathcal{F}_t^{j_{l_1}}(\boldsymbol{g}_t,\boldsymbol{r}_t),\ldots,\mathcal{F}_t^{j_{l_{H-2s-T_h}}}(\boldsymbol{g}_t,\boldsymbol{r}_t)\big)\notag\\
&{\leq}\sum_{l=1}^{{H\choose H-2s-T_h}}{\mathbb{P}(J=J_{l})H\big(\mathcal{F}_t^{J_l}(\boldsymbol{g}_t,\boldsymbol{r}_t)\big)}\notag\\
&=H\big(\mathcal{F}_t^{J_{l}}(\boldsymbol{g}_{t},\boldsymbol{r}_t)|J\big)\notag\\
&\overset{(a)}{\leq}\frac{H-2s-T_h}{H}\sum_{k=1}^{H}{H\big(\mathcal{F}_t^{k}(\boldsymbol{g}_t,\boldsymbol{r}_t)\big)},
\end{align}
where $(a)$ holds by Lemma \ref{Lema1}.

Combining \eqref{endeq2} and \eqref{endeq4}, we have
\begin{align}\label{endeq5}
&\frac{H-2s-T_h}{H}\sum_{k=1}^{H}{H\big(\mathcal{F}_t^{k}(\boldsymbol{g}_t,\boldsymbol{r}_t)\big)}\notag\\
&{\geq}H\big(\boldsymbol{g}_t|(\boldsymbol{g}_i:i\in[E]\backslash\{t\})\big)\notag\\
&\ -I\big(\{{\bf{u}}^{j_{l}}\}_{l=l_{H-2s+1},\cdots,l_{H-s}}; \boldsymbol{g}_t|(\boldsymbol{g}_i, \boldsymbol{r}_i:i\in[E]\backslash\{t\}),\notag\\
&\qquad \{{\bf{u}}^{j_{l}}\}_{l=l_{H-2s-T_h+1},\cdots,l_{H-2s}}\big)\notag\\
&{\geq}H\big(\boldsymbol{g}_t|(\boldsymbol{g}_i:i\in[E]\backslash\{t\})\big)\notag\\
&\ -I\big(\{\mathcal{L}_t^{j_{l}}(\boldsymbol{c}_{t,j_{l}})\}_{l=l_{H-2s+1},\cdots,l_{H-s}}; \boldsymbol{g}_t|(\boldsymbol{g}_i, \boldsymbol{r}_i:i\in[E]\backslash\{t\}\big),\notag\\
&\qquad \{{\bf{u}}^{j_{l}}\}_{l=l_{H-2s-T_h+1},\cdots,l_{H-2s}}\big).
\end{align}

By \eqref{comupdef}, we also have

\begin{align}\label{lowbound3}
C_\textnormal{up}&=\max_{i\in[E]} \frac{q_i}{p}\notag\\
&\geq \frac{q_t}{p}\notag\\
&=\frac{q_t\log|\mathbb{F}|}{p\log|\mathbb{F}|}\notag\\
&{\geq}\frac{\sum_{k=1}^{H}{H(\mathcal{F}_t^{k}(\boldsymbol{g}_{t}))}}{p\log|\mathbb{F}|}\notag\\
&\overset{(a)}{=} \frac{\sum_{k=1}^{H}{H(\mathcal{F}_t^{k}(\boldsymbol{g}_{t}))}}{H(\boldsymbol{g}_{t}|(\boldsymbol{g}_i:i\in[E]\backslash\{t\}))}\notag\\
&\overset{(b)}{\geq}\frac{H}{H-2s-T_h}(1\notag\\
&\ -\frac{1}{p\log|\mathbb{F}|}I(\{\mathcal{L}_t^{j_{l}}(\boldsymbol{c}_{t,j_{l}})\}_{l=l_{H-2s+1},\cdots,l_{H-s}}; \boldsymbol{g}_t|\notag\\
&\qquad \{{\bf{u}}^{j_{l}}\}_{l=l_{H-2s-T_h+1},\cdots,l_{H-2s}}),
\end{align}
where $(a)$ holds because $\boldsymbol{g}_t$ is a uniform distribution over the $\mathbb{F}^p$ and $\{\boldsymbol{g}_i\}_{i\in[E]}$ are independent; $(b)$ follows from \eqref{endeq5}.

Therefore, we have 
\begin{align}\label{end2}
C&_\textnormal{up}^{*}\notag\\
&\geq\sup_{m\in\Omega(s)}\frac{H}{H-2s-T_h}(1\notag\\
&\ -\frac{1}{p\log|\mathbb{F}|}I(\{\mathcal{L}_t^{j_{l}}(\boldsymbol{c}_{t,j_{l}})\}_{l=l_{H-2s+1},\cdots,l_{H-s}}; \boldsymbol{g}_t|\notag\\
&\qquad \{{\bf{u}}^{j_{l}}\}_{l=l_{H-2s-T_h+1},\cdots,l_{H-2s}})\notag\\
&=\frac{H}{H-2s-T_h}(1\notag\\
&\ -\inf_{m\in\Omega(s)}\frac{1}{p\log|\mathbb{F}|}I(\{\mathcal{L}_t^{j_{l}}(\boldsymbol{c}_{t,j_{l}})\}_{l=l_{H-2s+1},\cdots,l_{H-s}}; \boldsymbol{g}_t|\notag\\
&\qquad \{{\bf{u}}^{j_{l}}\}_{l=l_{H-2s-T_h+1},\cdots,l_{H-2s}})\notag\\
&\overset{(a)}{=}\frac{H}{H-2s-T_h}.
\end{align}
where $(a)$ holds because these servers do not receive $\boldsymbol{g}_t$ of Client $t$ under some failure scenarios $m$, i.e., $\exists m\in\Omega(s),$
$$I(\{\mathcal{L}_t^{j_{l}}(\boldsymbol{c}_{t,j_{l}})\}_{l=l_{H-2s+1},\cdots,l_{H-s}}; \boldsymbol{g}_t|\\ \{{\bf{u}}^{j_{l}}\}_{l=l_{H-2s-T_h+1},\cdots,l_{H-2s}})$$ is zero.

By \eqref{rCl}, we have
\begin{align}\label{lowbound4}
C&_\textnormal{down}\notag\\
&=\max_{l\in[E]} \frac{d_l}{p}\notag\\
&=\frac{d_i\log|\mathbb{F}|}{p\log|\mathbb{F}|}\notag\\
&=\frac{\sum_{l=1}^{H-s} d^{j_l}_i\log|\mathbb{F}|}{p\log|\mathbb{F}|}\notag\\
&{\geq}\frac{\sum_{l=1}^{H-s}H(W^{j_l}_i)}{p\log|\mathbb{F}|}\notag\\
&\overset{(a)}{\geq}\frac{\sum_{l=1}^{H-2s}H(W^{j_l}_i)}{p\log|\mathbb{F}|}\notag\\
&\overset{(b)}{\geq}\frac{H-2s}{H-2s-T_h}\frac{H(W^{j_{l_1}}_i,\ldots,W^{j_{l_{H-2s-T_h}}}_i)}{p\log|\mathbb{F}|}\notag\\
&\overset{(c)}{\geq}\frac{H-2s}{H-2s-T_h}(1\notag\\
&\ -\frac{I\big(\{\mathcal{L}_t^{j_{l}}(\boldsymbol{c}_{t,j_{l}})\}_{l=l_{H-2s+1},\cdots,l_{H-s}};\boldsymbol{g}_t|\{{\bf{u}}^{j_{l}}\}_{l=l_{H-2s-T_h+1},\cdots,l_{H-2s}}\big)}{p\log|\mathbb{F}|}),
\end{align}
where $(a)$ holds by the non-negativity of entropy; $(b)$ uses Lemma \ref{Lema1} again; $(c)$ follows from \eqref{endeq2}.

Therefore, we have 
\begin{align}\label{end3}
C&_\textnormal{down}^{*}\notag\\
&\geq\sup_{m\in\Omega(s)}\frac{H-2s}{H-2s-T_h}(1\notag\\
&\ -\frac{I\big(\{\mathcal{L}_t^{j_{l}}(\boldsymbol{c}_{t,j_{l}})\}_{l=l_{H-2s+1},\cdots,l_{H-s}};\boldsymbol{g}_t|\{{\bf{u}}^{j_{l}}\}_{l=l_{H-2s-T_h+1},\cdots,l_{H-2s}}\big)}{p\log|\mathbb{F}|})\notag\\
&=\frac{H-2s}{H-2s-T_h}(1-\notag\\
&\ -\inf_{m\in\Omega(s)}\frac{1}{p\log|\mathbb{F}|}I\big(\{\mathcal{L}_t^{j_{l}}(\boldsymbol{c}_{t,j_{l}})\}_{l=l_{H-2s+1},\cdots,l_{H-s}};\boldsymbol{g}_t|\notag\\
&\qquad \{{\bf{u}}^{j_{l}}\}_{l=l_{H-2s-T_h+1},\cdots,l_{H-2s}}\big)\notag\\
&\ {=}\frac{H-2s}{H-2s-T_h}.
\end{align}
\subsection{Proof of Security}\label{secproof}

For a federated learning setting with $E$ clients, $H$ servers, {length  of gradient $p$} and a resiliency threshold of $s<\frac{H}{2}$, the secure aggregation protocol presented in Section \ref{LCM} is $(T_h,T_c) $-private.

For any $T_c$ collusion clients, in addition to their own stored gradient values $\{\boldsymbol{g}_i\}_{i\in\mathcal{T}_c}$ and random noise $\{\boldsymbol{s}_{ij}\}_{i\in\mathcal{T}_c \textnormal{or} j\in\mathcal{T}_c }$, they can also obtain messages which are functions of $\{\boldsymbol{y}_{l_1},\cdots,\boldsymbol{y}_{l_{E-T_c}}\}$ where $l_{i}\in[E]\backslash\mathcal{T}_c$ for $i\in[E-T_c]$ and $l_1<\cdots<l_{E-T_c}$. Therefore, we have 
\begin{align}
I&(\{\boldsymbol{g}_i\}_{i\in[E]\backslash\mathcal{T}_c}; \mathtt{VIEW}_{\mathcal{T}_c}|\boldsymbol{g}_D,\{\boldsymbol{g}_l\}_{l\in{\mathcal{T}_c}})\notag\\
&\leq I(\{\boldsymbol{g}_i\}_{i\in[E]\backslash\mathcal{T}_c}; \{\boldsymbol{y}_i\}_{i\in[E]},\{\boldsymbol{s}_{ij}\}_{i\in\mathcal{T}_c \textnormal{or} j\in\mathcal{T}_c }|\{\boldsymbol{g}_l\}_{l\in{\mathcal{T}_c}},\boldsymbol{g}_D)\notag\\
&\ {=}I(\{\boldsymbol{g}_i\}_{i\in[E]\backslash\mathcal{T}_c};\{\boldsymbol{s}_{ij}\}_{i\in\mathcal{T}_c \textnormal{or} j\in\mathcal{T}_c }|\{\boldsymbol{g}_l\}_{l\in{\mathcal{T}_c}},\boldsymbol{g}_D)\notag\\
&\ + I(\{\boldsymbol{g}_i\}_{i\in[E]\backslash\mathcal{T}_c}; \{\boldsymbol{y}_i\}_{i\in[E]}|\{\boldsymbol{g}_l\}_{l\in{\mathcal{T}_c}},\boldsymbol{g}_D,\{\boldsymbol{s}_{ij}\}_{i\in\mathcal{T}_c \textnormal{or} j\in\mathcal{T}_c })\notag\\
&\overset{(a)}{=}I(\{\boldsymbol{g}_i\}_{i\in[E]\backslash\mathcal{T}_c}; \{\boldsymbol{y}_i\}_{i\in[E]}|\{\boldsymbol{g}_l\}_{l\in{\mathcal{T}_c}},\boldsymbol{g}_D,\{\boldsymbol{s}_{ij}\}_{i\in\mathcal{T}_c \textnormal{or} j\in\mathcal{T}_c })\notag\\
&\ {=}I(\{\boldsymbol{g}_i\}_{i\in[E]\backslash\mathcal{T}_c\cup\{l_1\}}; \{\boldsymbol{y}_i\}_{i\in[E]}|\notag\\
&\qquad\{\boldsymbol{g}_l\}_{l\in{\mathcal{T}_c}},\boldsymbol{g}_D,\{\boldsymbol{s}_{ij}\}_{i\in\mathcal{T}_c \textnormal{or} j\in\mathcal{T}_c })\notag\\
&\ +I(\boldsymbol{g}_{l_1}; \{\boldsymbol{y}_i\}_{i\in[E]}|\{\boldsymbol{g}_i\}_{i\in[E]\backslash\{l_1\}},\boldsymbol{g}_D,\{\boldsymbol{s}_{ij}\}_{i\in\mathcal{T}_c \textnormal{or} j\in\mathcal{T}_c })\notag\\
&\overset{(b)}{=}I(\{\boldsymbol{g}_i\}_{i\in[E]\backslash\mathcal{T}_c\cup\{l_1\}}; \{\boldsymbol{y}_i\}_{i\in[E]}|\notag\\
&\qquad\{\boldsymbol{g}_l\}_{l\in{\mathcal{T}_c}},\boldsymbol{g}_D,\{\boldsymbol{s}_{ij}\}_{i\in\mathcal{T}_c \textnormal{or} j\in\mathcal{T}_c })\notag\\
&\ {=}I(\{\boldsymbol{g}_i\}_{i\in[E]\backslash\mathcal{T}_c\cup\{l_1\}}; \{\boldsymbol{y}_i\}_{i\in[E]\backslash\{l_1\}}|\notag\\
&\qquad\{\boldsymbol{g}_l\}_{l\in{\mathcal{T}_c}},\boldsymbol{g}_D,\{\boldsymbol{s}_{ij}\}_{i\in\mathcal{T}_c \textnormal{or} j\in\mathcal{T}_c })\notag\\
&\ +I(\{\boldsymbol{g}_i\}_{i\in[E]\backslash\mathcal{T}_c\cup\{l_1\}}; \boldsymbol{y}_{l_1}|\notag\\
&\qquad\{\boldsymbol{g}_l\}_{l\in{\mathcal{T}_c}},\boldsymbol{g}_D,\{\boldsymbol{s}_{ij}\}_{i\in\mathcal{T}_c \textnormal{or} j\in\mathcal{T}_c },\{\boldsymbol{y}_i\}_{i\in[E]\backslash\{l_1\}})\notag\\
&\overset{(c)}{=} I(\{\boldsymbol{g}_i\}_{i\in[E]\backslash\mathcal{T}_c\cup\{l_1\}}; \{\boldsymbol{y}_i\}_{i\in[E]\backslash\{l_1\}}|\notag\\
&\qquad\{\boldsymbol{g}_l\}_{l\in{\mathcal{T}_c}},\boldsymbol{g}_D,\{\boldsymbol{s}_{ij}\}_{i\in\mathcal{T}_c \textnormal{or} j\in\mathcal{T}_c })\notag\\
&\overset{(d)}{=}I(\{\boldsymbol{g}_i\}_{i\in[E]\backslash\mathcal{T}_c\cup\{l_1\}}; \{\boldsymbol{y}_i\}_{i\in[E]\backslash\mathcal{T}_c\cup\{l_1\}}|\notag\\
&\qquad\{\boldsymbol{g}_l\}_{l\in{\mathcal{T}_c}},\boldsymbol{g}_D,\{\boldsymbol{s}_{ij}\}_{i\in\mathcal{T}_c \textnormal{or} j\in\mathcal{T}_c })\notag\\
&\ {=} H(\{\boldsymbol{y}_i\}_{i\in[E]\backslash\mathcal{T}_c\cup\{l_1\}}|\{\boldsymbol{g}_l\}_{l\in{\mathcal{T}_c}},\boldsymbol{g}_D,\{\boldsymbol{s}_{ij}\}_{i\in\mathcal{T}_c \textnormal{or} j\in\mathcal{T}_c })\notag\\
&\ -H(\{\boldsymbol{y}_i\}_{i\in[E]\backslash\mathcal{T}_c\cup\{l_1\}}|\{\boldsymbol{g}_i\}_{i\in[E]\backslash\{l_1\}},\boldsymbol{g}_D,\{\boldsymbol{s}_{ij}\}_{i\in\mathcal{T}_c \textnormal{or} j\in\mathcal{T}_c })\notag\\
&\ {\leq} H(\{\boldsymbol{y}_i\}_{i\in[E]\backslash\mathcal{T}_c\cup\{l_1\}}|\{\boldsymbol{g}_l\}_{l\in{\mathcal{T}_c}},\boldsymbol{g}_D,\{\boldsymbol{s}_{ij}\}_{i\in\mathcal{T}_c \textnormal{or} j\in\mathcal{T}_c })\notag\\
&\ -H(\{\boldsymbol{y}_i\}_{i\in[E]\backslash\mathcal{T}_c\cup\{l_1\}}|\{\boldsymbol{g}_i\}_{i\in[E]\backslash\{l_1\}},\boldsymbol{g}_D,\{\boldsymbol{s}_{ij}\}_{i\neq l_1 })\notag\\
&\ {=} H(\{\boldsymbol{y}_i\}_{i\in[E]\backslash\mathcal{T}_c\cup\{l_1\}}|\{\boldsymbol{g}_l\}_{l\in{\mathcal{T}_c}},\boldsymbol{g}_D,\{\boldsymbol{s}_{ij}\}_{i\in\mathcal{T}_c \textnormal{or} j\in\mathcal{T}_c })\notag\\
&\ -H(\boldsymbol{s}_{l_1l_2},\cdots,\boldsymbol{s}_{l_1l_{E-T_c}}|\{\boldsymbol{g}_i\}_{i\in[E]\backslash\{l_1\}},\boldsymbol{g}_D,\{\boldsymbol{s}_{ij}\}_{i\neq l_1 })\notag\\
&\overset{(e)}{=} H(\{\boldsymbol{y}_i\}_{i\in[E]\backslash\mathcal{T}_c\cup\{l_1\}}|\{\boldsymbol{g}_l\}_{l\in{\mathcal{T}_c}},\boldsymbol{g}_D,\{\boldsymbol{s}_{ij}\}_{i\in\mathcal{T}_c \textnormal{or} j\in\mathcal{T}_c })\notag\\
&\ -H(\boldsymbol{s}_{l_1l_2},\cdots,\boldsymbol{s}_{l_1l_{E-T_c}})\notag\\
&\ {\leq}\sum_{i=2}^{E-T_c} (H(\boldsymbol{y}_{l_i})-H(\boldsymbol{s}_{l_1l_i}))\leq 0,
\end{align}
where $(a)$ holds because $\{\boldsymbol{s}_{ij}\}_{i\in\mathcal{T}_c \textnormal{or} j\in\mathcal{T}_c }$ and $\{\boldsymbol{g}_{i}\}_{i\in[E]}$ are independent; $(b)$ and $(c)$ holds by $\boldsymbol{g}_D=\sum_{i=1}^{E}\boldsymbol{g}_i$ and $\boldsymbol{g}_D=\sum_{i=1}^{E}\boldsymbol{y}_i$, respectively; $(c)$ and $(d)$  follow from \eqref{lccm}; $(e)$ holds by the independence of $\boldsymbol{s}_{l_1l_2},\cdots,\boldsymbol{s}_{l_1l_{E-T_c}}$; the last inequality holds because $\boldsymbol{y}_{l_i}\in\mathbb{F}^p$ and $\boldsymbol{s}_{l_1l_j}$ is uniformly distributed over $\mathbb{F}^p$ for $i\in[E-T_c]$.

For any $T_h$ collusion servers denoted by $\{l_1,\cdots,l_{T_h}\}$, we have
\begin{align}
&\begin{pmatrix}
u_1(\alpha_{l_1})&\cdots&u_E(\alpha_{l_1})\\
u_1(\alpha_{l_2})&\cdots& u_E(\alpha_{l_1})\\
\vdots&\cdots&\vdots\\
u_1(\alpha_{l_{T_h}})&\cdots&u_E(\alpha_{l_1})\\
\end{pmatrix}\\
&=\begin{pmatrix}\label{collm}
U_{{l_1}1}&\cdots&U_{{l_1}(k+T_h)}\\
U_{{l_2}1}&\cdots&U_{{l_2}(k+T_h)} \\
\vdots&\ddots&\vdots\\
U_{{l_{T_h}}1}&\cdots&U_{{l_{T_h}}(k+T_h)}\\
\end{pmatrix}
\begin{pmatrix}
\boldsymbol{y}_{1,1}&\cdots&\boldsymbol{y}_{E,1}\\
\boldsymbol{y}_{1,2}&\cdots&\boldsymbol{y}_{E,2}\\
\vdots&\cdots&\vdots\\
\boldsymbol{y}_{1,k}&\cdots&\boldsymbol{y}_{E,k}\\
\boldsymbol{Z}_{1,1}&\cdots&\boldsymbol{Z}_{E,1}\\
\boldsymbol{Z}_{1,2}&\cdots&\boldsymbol{Z}_{E,1}\\
\vdots&\cdots&\vdots\\
\boldsymbol{Z}_{1,T_h}&\cdots&\boldsymbol{Z}_{E,T_h}\\
\end{pmatrix}\\
&=\begin{pmatrix}
U_A&U_B
\end{pmatrix}
\begin{pmatrix}
\boldsymbol{Y}\\
\boldsymbol{Z}
\end{pmatrix}=U_A\boldsymbol{Y}+U_B\boldsymbol{Z},
\end{align}
where $U_A$ denotes the first $k$ columns of the left matrix in (\ref{collm}), $U_B$ denotes the last $T_h$ columns in (\ref{collm}), $\boldsymbol{Y}$ and $\boldsymbol{Z}$ denote the first $k$ rows and the last $T_h$ rows of the right matrix in (\ref{collm}), respectively. Therefore, we have

\begin{align}
I&(\{\boldsymbol{g}_i\}_{i\in[E]}; \mathtt{VIEW}_{\mathcal{T}_h})\notag\\
&\overset{(a)}{\leq} I(\{\boldsymbol{g}_i\}_{i\in[E]}; \{(u_1(\alpha_l),\cdots,u_E(\alpha_l))\}_{l\in{\mathcal{T}_h}})\notag\\
&\overset{(b)}{=}I(\{\boldsymbol{g}_i\}_{i\in[E]}; U_A\boldsymbol{Y}+U_B\boldsymbol{Z})\notag\\
&\overset{(c)}{=}I(\{\boldsymbol{g}_i\}_{i\in[E]}; U_B^{-1}U_A\boldsymbol{Y}+\boldsymbol{Z})\notag\\
&\ {=}H(U_B^{-1}U_A\boldsymbol{Y}+\boldsymbol{Z})-H(U_B^{-1}U_A\boldsymbol{Y}+\boldsymbol{Z}|\{\boldsymbol{g}_i\}_{i\in[E]})\notag\\
&\ {\leq} H(U_B^{-1}U_A\boldsymbol{Y}+\boldsymbol{Z})\notag\\
&\ -H(U_B^{-1}U_A\boldsymbol{Y}+\boldsymbol{Z}|\{\boldsymbol{g}_i\}_{i\in[E]}, \{\boldsymbol{s}_i\}_{i\in[E]})\notag\\
&\overset{(d)}{=}H(U_B^{-1}U_A\boldsymbol{Y}+\boldsymbol{Z})-H(\boldsymbol{Z})\overset{(e)}{\leq} 0,
\end{align}
where $(a)$ holds because all messages that $T_h$ colluding servers can obtain are functions of $\{(u_1(\alpha_l),\cdots,u_E(\alpha_l))\}_{l\in{\mathcal{T}_h}}$; $(b)$ follows from \eqref{collm}; in $(c)$, we use the fact that $U_B$ is an invertible matrix due to the characteristics of lagrangian interpolation; $(d)$ holds because $\boldsymbol{y}_{i,r}=\boldsymbol{g}_{i,r}+\boldsymbol{s}_{i,r}$ for $i\in[E],j\in[k]$ and $\boldsymbol{Z}$ and $\{\boldsymbol{g}_i\}_{i\in[E]}, \{\boldsymbol{s}_i\}_{i\in[E]}$ are independent; $(e)$ holds because $U_B^{-1}U_A\boldsymbol{Y}+\boldsymbol{Z}$ and $\boldsymbol{Z}$ are in $\mathbb{F}^{\frac{pT_h}{k}\times E}$ and $\boldsymbol{Z}$ is a uniform distribution over $\mathbb{F}^{\frac{pT_h}{k}\times E}$.
\subsection{Proof of Lemma 1}\label{l1proof}
We first prove $T_h\leq H-2s-1$ in (\ref{cup2}), when a secure aggregation protocol is $(T_h,T_c) $-private and $s$-resiliency.  Recall our threat model for servers: any set of at most $T_h$ colluding servers cannot get any information about the client’s inputs. Suppose $T_h\geq H-2s$, we have 
\begin{equation}\label{sec12}
I(\boldsymbol{g}_1+\boldsymbol{g}_2+\cdots+\boldsymbol{g}_E;\mathtt{VIEW}_{\mathcal{T}})=0,
\end{equation}
where $\mathcal{T}$ denotes any set of $T\leq T_h$ servers. We consider a special failure table that for a client, successfully links $H-s$ servers, but the data sent by other clients are only received by $H-2s$ servers in these nodes. In other words, other clients' data are not successfully sent to the same $s$ servers. Then, the client needs to recover the aggregated value from all the messages these $H-2s$ server nodes receive. However, by \eqref{pr1}, to satisfy privacy against any $H-2s$ servers
\begin{equation}\label{sec22}
I(\boldsymbol{g}_1+\boldsymbol{g}_2+\cdots+\boldsymbol{g}_E;\mathtt{VIEW}_{\mathcal{T}'})=0,
\end{equation}
where $\mathcal{T}'$ denotes these $H-2s$ servers. Therefore, this client cannot recover the final aggregated result, which violates the definition of $s$-resiliency.

Next, we prove $T_c\leq E-2$. Suppose $T_c = E-1$, if any $E-1$ clients collude, the data of the remaining client must be obtained according to the aggregated value. Therefore, we have $T_c < E-1$, i.e., $T_c \leq E-2$.

%\appendices

\end{document}